\newcommand{\mathC}{\mathbb{C}}
\newcommand{\eqnref}[1]{(\ref{#1})}
\newcommand{\sign}{{\rm sign}}
\newcommand{\spec}{{\rm spec}}
\newcommand{\rr}{\mbox{\rm r}}
\newcommand{\Span}{\mbox{span}}
\begin{document}

\title{$2$-norm error bounds and estimates for Lanczos approximations to linear systems and rational matrix functions\thanks{This work was supported by Deutsche Forschungsgemeinschaft within SFB-TR 55 ``Hadron Physics from Lattice QCD}}


\author{A.~Frommer\thanks{Fachbereich Mathematik und Naturwissenschaften, Bergische Universit\"at Wuppertal, 42097 Wuppertal, Germany \texttt{\{frommer,kkahl,rittich\}@math.uni-wuppertal.de}} 
\and K.~Kahl\footnotemark[2] \and Th.~Lippert\thanks{J\"ulich Supercomputing Centre, Forschungszentrum J\"ulich GmbH, 52425 J\"ulich, Germany \texttt{th.lippert@fz-juelich.de}} \and H.~Rittich\footnotemark[2]}

\maketitle

\begin{abstract}
The Lanczos process constructs a sequence of orthonormal vectors $v_m$ spanning a nested sequence of Krylov subspaces generated by a hermitian matrix $A$ and some starting vector $b$. In this paper we show how to cheaply recover a secondary Lanczos process starting at
an arbitrary Lanczos vector $v_m$. This secondary process is then used to efficiently obtain computable error estimates and error bounds
for the Lanczos approximations to the action of a rational matrix function on a vector.
This includes, as a special case, the Lanczos approximation to the solution of a linear system $Ax = b$.
Our approach uses the relation between the Lanczos process and quadrature as developed 
by Golub and Meurant. It is different from methods known so far because of its use of the secondary Lanczos process. 
With our approach, it is now in particular possible to efficiently obtain {\em upper bounds} for the error in the {\em 2-norm}, provided a lower bound on the smallest eigenvalue of $A$ is known. This holds in particular for a large class of rational matrix functions including best rational approximations to the inverse square root and the sign function.  We compare our approach to other existing error estimates and bounds known from the literature and include results of several numerical experiments.
\end{abstract}

\begin{keywords}
Lanczos process, CG method, rational matrix functions, multishift CG, error estimates, error bounds, Gauss quadrature
\end{keywords}

\begin{AMS}
65F30, 65F10, 65D32
\end{AMS}

\section{Introduction}
\label{sec:intro}
Our main interest in this paper is in error bounds and error estimates for Lanczos approximations
of the action of a rational matrix function on a given vector. To set the stage, we consider in this introduction the most familiar case where the rational function is $f(t) = t^ {-1}$, i.e., we consider a linear system
\begin{equation} \label{linsysdef:eq}
  A x = b
\end{equation}
where $A \in \mathbb{C}^{n \times n}$ is hermitian positive definite (hpd), large and sparse.
The method of choice to solve such a system is the conjugate gradient (CG) method of Hestenes and Stiefel \cite{hest-stiefel} in which---given an initial guess 
$x_0$---the $m$-th iterate $x_m$ is taken from the affine Krylov subspace
\[
x_0 + K_m(A,r_0), \mbox{ where } K_m(A,r_0) = \Span\{r_0,Ar_0,\ldots,A^{m-1}r_0\}
\]
such that its residual $r_m = b-Ax_m$ is orthogonal to $K_m(A,b)$. This Galerkin condition is equivalent to requiring that $x_m$
minimizes the $A$-norm of the error $x_*-x_m$, where $x_* = A^{-1}b$, over all $x \in x_0 + K_m(A,b)$. 
Algorithmically, CG is implemented using short recurrences which makes the method very efficient computationally.

In order to obtain a stopping criterion for the CG iteration it is important to have some information 
on the error $x_*-x_m$. A simple measure for the error is
the norm of the residual $r_m = b-Ax_m$, since by the definition of the $A^2$-norm
\[
\| r_m \|^2 = \langle A(x_*-x_m), A(x_*-x_m) \rangle = \| x_*-x_m \|_{A^2}^2.
\]
For linear systems, the residual is a quantity which is easily available. If the data in the matrix $A$ is not known exactly, $\| b-Ax \| \leq \epsilon$ translates into $\|b-(A+\Delta A)x\| \leq \epsilon + \| \Delta A\| \cdot \| x\|$, which bounds the residual of the perturbed matrix $A+\Delta A$, provided we know a bound on $\|\Delta A\|$. This shows that the residual is a convenient error 
measure particularly when we have inaccurate initial data.

The energy norm, i.e., the $A$-norm, of the error, $\| x_*-x_m \|_{A} = \langle x_*-x_m, A(x_*-x_m) \rangle^{1/2} $ is 
often the most natural measure for the error, since it relates to physically meaningful quantities
in many applications. Also, the 2-norm of the error, $\langle x_*-x_m, x_*-x_m \rangle^{1/2} $,
is of interest as an operator independent measure for the error, particularly in connection with rational matrix functions.  
  
For any $z \in \mathC^n$ we have
\[
\lambda_{\min} \leq \frac{\|  z \|_{A^2}}{\| z \|_2} \leq \lambda_{\max} \, \mbox{ and } \,
\lambda^{1/2}_{\min}  \leq \frac{\| z\|_{A^2}}{\| z \|_A} \leq \lambda_{\max}^{1/2},
\]
where $\lambda_{\min} $ and $\lambda_{\max}$ denote the smallest and largest eigenvalues of $A$, respectively.
Hence we have, for example,
\[
\| z \|_2 \leq \frac{1}{\lambda_{\rm min}} \|z\|_{A^2} , \enspace \| z \|_A \leq \frac{1}{\sqrt{\lambda_{\rm min}}} \|z\|_{A^2}
\]
but the factors $\frac{1}{\lambda_{\rm min}}$ and $\frac{1}{\sqrt{\lambda_{\rm min}}}$ represent
only a worst case bound; for a given $z$ the 
ratio of the norms can be substantially smaller. 
Moreover, the extremal eigenvalues or bounds for them are not 
necessarily available. 

In \cite{Meurant.97,Meurant.05}, see also \cite{golub-meurant-book} it was shown that one can enhance the CG iteration at very low computational cost to obtain, in addition to the iterates, estimates and bounds for the error of the current iterate in a retrospective manner: For a given small positive integer $k$,  error estimates for the iterate at step $m$ can be determined at step $m+k$ ($A$-norm and 2-norm); see also \cite{Strakos.Tichy.02,Strakos.Tichy.05} for an estimate for the 2-norm obtained at iteration $m+2k$. These estimates 
become more and more precise as $k$ increases. While for the $A$-norm one can obtain lower and upper bounds in this manner, one gets only a lower bound in the case of the 2-norm. We will give more details in section~\ref{comparison:sec}. 


These error estimates and bounds rely on an elegant theory relating an integral representation of the error norms with orthogonal 
polynomials, Gaussian quadrature rules and the Lanczos process, see \cite{golub-meurant93,golub-meurant97} and the book \cite{golub-meurant-book}. In the present paper we propose to use this theory in a different manner to be able to determine lower and also upper bounds for the error in the 2-norm. Actually, instead of dealing with the CG iteration for a linear system, our focus will be, more generally, on Lanczos approximations to the action $f(A)b$ of a rational matrix function $f$ on a vector $b$, in this manner continuing the work from \cite{FrSi07}. In the matrix function case, there is no natural and easily accessible ``residual'', and, as opposed to the linear system case, the $A$-norm is not a ``natural'', physically motivated measure for the error any more. We therefore focus on the 2-norm. Note that {\em upper} bounds for the error are particularly useful, since a stopping criterion based on the upper bound being less than a prescribed threshold guarantees that the actual error is indeed less than this threshold.  

\section{Lanczos process and Lanczos approximations} \label{lanczos:sec}

In this section we recall the Lanczos process (cf.\ \cite{golub-meurant-book} or \cite{Saad-book2}) and the related Lanczos approximations to vectors of the form $f(A)b$, with $f$ a function defined on the positive real axis and $b \in \mathC^n$. Note that for $f: t \to t^{-1}$ the vector $f(A)b$ is 
the solution of the linear system $A^{-1}b$. 
Assuming that $v_1 \in \mathC^{n}$ is normalized to $\|v_1\|_2 = 1$, 
the Lanczos process computes orthonormal vectors $v_1,v_2,\ldots$ such that $v_1,\ldots,v_m$ form an orthonormal basis of the nested sequence of Krylov subspaces $K_m(A,v_1)$, $m=1,2,\ldots$. Algorithmically, 
$v_{m+1}$ is obtained by orthogonalizing $Av_m$ against all previous vectors. Since $A$ is hermitian, it is actually 
sufficient to orthogonalize against $v_m$ and $v_{m-1}$, see Algorithm~\ref{algo:lanczos}. 

\begin{algorithm}
  \caption{Lanczos process}
  \label{algo:lanczos}
  choose $v_1$ such that $\| v_1 \| = 1 $ \;
  let $\beta_0 = 0$, $v_0 = 0$ \;
  \For{ $j = 1, \dots, m$ }{
    $w_j = A v_j - \beta_{j-1} v_{j-1}$ \;
    $\alpha_j = v_j^Hw_j$ \; 
    $w_j = w_j - \alpha_j v_j$ \;
    $\beta_j = \| w_j \|_2$ \;
    \lIf{ $\beta_j = 0$ }{
      stop
    } \;
    $v_{j+1} = (1/ \beta_j) \cdot w_j $
  }
\end{algorithm}

The Lanczos process can be summarized via the {\em Lanczos relation}
\begin{equation} \label{lanczos_rel:eq}
A V_m = V_{m+1}\overline{T}_{m} = V_m T_m + \beta_{m}\cdot v_{m+1}e_m^H,
\end{equation}   
where $V_m = [v_1|\ldots|v_m] \in \mathbb{C}^{n \times m}$ is the matrix containing the Lanczos vectors,
$e_m = (0,\ldots,0,1)^H \in \mathC^m$ 
and
\[
  \overline{T}_{m} = \begin{bmatrix}
    \alpha_1 & \beta_1 \\
    \beta_1 & \alpha_2 & \ddots \\
    & \ddots & \ddots & \beta_{m-1} \\
    & & \beta_{m-1} & \alpha_m \\
    & & & \beta_m
  \end{bmatrix}
  \
 = \left[  \begin{array}{cc} 
            T_{m} & \\
            \beta_{m} \cdot e_m^H  \end{array}
\right] \in \mathbb{R}^{(m+1) \times m} 
\]
with $T_{m}$ a (real) symmetric tridiagonal matrix. 

Throughout the whole paper we will use the notation $e_j$ to denote the $j$-th canonical unit vector from $\mathC^{\ell}$, where we explicitly mention the dimension $\ell$ of the space
when necessary. We just used $e_m \in \mathC^m, e_m = (0,\ldots,0,1)^H$, and we will often
use $e_1 \in \mathC^m, e_1 = (1,0,\ldots,0)^H$ etc. For ease of terminology, we will also call $\overline{T}_m$ a tridiagonal matrix, although it is not square. 

The following two basic properties of the Lanczos process will be important for this paper.

\begin{lemma} \label{invariance:lem} 
\begin{itemize}
\item[(i)] {\em Shift invariance \cite{Parlett.80}:} Let $\sigma \in \mathC$ and put $A^\sigma = A - \sigma I$. Assume that we 
start the Lanczos process for $A^\sigma$ with the same initial vector ${v}^\sigma_1 = v_1$ as for the Lanczos process for $A$. Then the matrices ${V}^\sigma_m, \overline{T}_{m}^\sigma$ of the Lanczos relation \eqnref{lanczos_rel:eq} for ${A}^\sigma$, starting with ${v}^\sigma_1$, are given by
\[
 {V}^\sigma_m = V_m, \enspace \overline{T}^\sigma_{m} = \overline{T}_{m} - \sigma \left[ \begin{array}{ccc} & I & \\
                                                                               0 & \cdots & 0 
                                               \end{array} \right].
\]
  \item[(ii)] {\em Essential uniqueness of the Lanczos relation:} Assume that we have $V_{m+1} = [V_m \mid v_{m+1}] \in \mathC^{n \times (m+1)}$ with orthonormal columns, and $\overline{T}_m \in \mathC^{(m+1) \times m}$ tridiagonal with positive off-diagonal entries, satisfying
  \begin{equation} \label{lanczos3_relation:eq}
        AV_{m} = V_{m+1} \overline{T}_m.
\end{equation}
  Then \eqnref{lanczos3_relation:eq} is the Lanczos relation for the matrix $A$ with starting vector $v_1$, i.e., the columns of $V_{m+1}$ are the Lanczos vectors and the entries of $\overline{T}_m$ 
the corresponding coefficients. 
\end{itemize}
\end{lemma}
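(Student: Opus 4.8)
The plan is to establish the two parts in reverse order: prove the essential-uniqueness statement (ii) directly, and then obtain the shift invariance (i) as a short corollary of (ii). For (ii), I would first multiply the hypothesized relation $AV_m = V_{m+1}\overline{T}_m$ on the left by $V_{m+1}^H$ and use orthonormality of the columns to read off $\overline{T}_m = V_{m+1}^H A V_m$. Since $A$ is hermitian, the top $m\times m$ block $T_m = V_m^H A V_m$ is hermitian; being tridiagonal with positive (hence real) off-diagonal entries, it is therefore real symmetric tridiagonal, so that the super- and sub-diagonal entries of the top block agree and $\overline{T}_m$ has exactly the shape of a Lanczos tridiagonal matrix.

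With this structural observation in hand I would proceed by induction on $j$, showing that the $j$-th column $v_j$ is the $j$-th Lanczos vector and that the entries of $\overline{T}_m$ are the corresponding Lanczos coefficients. Reading off column $j$ of the relation gives $A v_j = \beta_{j-1} v_{j-1} + \alpha_j v_j + \beta_j v_{j+1}$ (with the convention $\beta_0 v_0 = 0$), where $\alpha_j$ is the $(j,j)$ entry and $\beta_{j-1}, \beta_j$ the relevant off-diagonal entries of $\overline{T}_m$. Taking the inner product with $v_j$ and using orthonormality yields $\alpha_j = v_j^H A v_j$, which is exactly the coefficient computed by Algorithm~\ref{algo:lanczos}; the remaining term $\beta_j v_{j+1} = A v_j - \beta_{j-1} v_{j-1} - \alpha_j v_j$ then coincides with the Lanczos vector $w_j$. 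Since $v_{j+1}$ has unit norm and $\beta_j > 0$, this fixes $\beta_j = \|w_j\|_2$ and $v_{j+1} = w_j/\beta_j$ uniquely, which closes the induction (the base case $j=1$ being the same computation with $v_0 = 0$).

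With (ii) proved, I would obtain (i) as follows. Writing $E \in \mathC^{(m+1)\times m}$ for the matrix obtained by appending a zero row to the identity $I_m$ (so that $E$ is precisely the matrix subtracted in the statement of (i)), one has $V_{m+1}E = V_m$, whence
\[
A^\sigma V_m = (A - \sigma I) V_m = V_{m+1}\overline{T}_m - \sigma V_m = V_{m+1}(\overline{T}_m - \sigma E).
\]
The matrix $\overline{T}_m - \sigma E$ differs from $\overline{T}_m$ only on the diagonal, hence is still tridiagonal with the same, positive, off-diagonal entries, while $V_{m+1}$ still has orthonormal columns. Therefore part (ii), applied to $A^\sigma$ with starting vector $v_1^\sigma = v_1$, identifies this as the Lanczos relation for $A^\sigma$; comparison with the essentially unique Lanczos relation then gives $V_m^\sigma = V_m$ and $\overline{T}_m^\sigma = \overline{T}_m - \sigma E$, which is the claim.

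The base case and the index bookkeeping in the induction are routine. The one genuine subtlety, and the step I would be most careful about, is the combined role of hermiticity and positivity of the off-diagonal entries: hermiticity of $A$ is what forces the super- and sub-diagonal coefficients of the top block of $\overline{T}_m$ to coincide, so that the coefficient of $v_{j-1}$ appearing in column $j$ really is the $\beta_{j-1}$ already determined at step $j-1$, and positivity of $\beta_j$ is exactly what removes the sign ambiguity in $v_{j+1}$, turning ``unique up to signs'' into the genuinely unique Lanczos relation needed for the uniqueness conclusion.
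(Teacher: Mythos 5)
Your proof of part (ii) is correct and is essentially the paper's own argument: match the columns of \eqnref{lanczos3_relation:eq} inductively against the steps of Algorithm~\ref{algo:lanczos}, using orthonormality to identify $\alpha_j$ and positivity of the off-diagonal entries to fix $\beta_j$ and $v_{j+1}$. The problem lies in your derivation of part (i) from part (ii) --- a step the paper avoids by proving (i) directly ``by inspection'' of the algorithm --- and the difference matters. Part (i) is asserted for every $\sigma \in \mathC$, and for non-real $\sigma$ the matrix $A^\sigma = A - \sigma I$ is \emph{not} hermitian. Part (ii) is stated for the hermitian matrix $A$, and its proof (yours and the paper's alike) uses hermiticity to force the super- and sub-diagonal entries of $V_{m+1}^H A V_m$ to coincide; as you yourself stress, that coincidence is what makes the induction close. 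So you are invoking (ii) for a matrix to which it does not apply, and this is not a formality: the conclusion of (ii) is false for general non-hermitian matrices. For instance, take $V_3 = I_3$ and
\[
\overline{T}_2 = \begin{pmatrix} 1 & 2 \\ 1 & 1 \\ 0 & 1 \end{pmatrix}, \qquad
A = \begin{pmatrix} 1 & 2 & 0 \\ 1 & 1 & 0 \\ 0 & 1 & 1 \end{pmatrix}.
\]
Then $AV_2 = V_3 \overline{T}_2$ holds with orthonormal columns and positive off-diagonal entries, yet Algorithm~\ref{algo:lanczos} started at $v_1 = e_1$ produces $v_3 = (1,0,1)^T/\sqrt{2} \neq e_3$: the algorithm subtracts $\beta_1 v_1$ with $\beta_1 = t_{21} = 1$, while column $2$ of the relation carries the coefficient $t_{12} = 2$, so a spurious component along $v_1$ survives the orthogonalization. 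Exactly this failure mode must be excluded for $A^\sigma$, and your argument as written excludes it only when $\sigma$ is real.

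The gap is easy to close, in either of two ways. (a) Prove (i) directly by induction on the algorithm, as the paper does: if $v_i^\sigma = v_i$ for $i \le j$ and $\beta_{j-1}^\sigma = \beta_{j-1}$, then $\alpha_j^\sigma = v_j^H (A - \sigma I) v_j - \beta_{j-1} v_j^H v_{j-1} = \alpha_j - \sigma$, hence $w_j^\sigma = (A - \sigma I)v_j - \beta_{j-1} v_{j-1} - (\alpha_j - \sigma) v_j = w_j$, so $\beta_j^\sigma = \beta_j$ and $v_{j+1}^\sigma = v_{j+1}$; this works for arbitrary complex $\sigma$. (b) Keep your architecture, but strengthen (ii) to the statement your induction actually proves: for an \emph{arbitrary} square matrix $B$, if $B V_m = V_{m+1} \overline{T}_m$ with $V_{m+1}$ having orthonormal columns and $\overline{T}_m$ tridiagonal with positive sub-diagonal entries that are \emph{equal} to the corresponding super-diagonal entries, then this is the Lanczos relation for $B$ with starting vector $v_1$. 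Hermiticity entered your proof of (ii) only to establish this equal-off-diagonal structure, which is now hypothesized; since $\overline{T}_m - \sigma E$ inherits that structure from $\overline{T}_m$ (the subtraction touches only the diagonal), part (i) then follows for all $\sigma \in \mathC$ exactly as you argued.
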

\begin{proof} The first result follows directly by inspection of the Lanczos process, Algorithm~\ref{algo:lanczos}. For part (ii) we note that \eqnref{lanczos3_relation:eq}, together with
the assumption that $V_{m+1}$ has orthonormal columns and that $\overline{T}_m$ is tridiagonal, already implies that for $j=1,\ldots,m$ the vector $v_{j+1}$ is a positive scalar multiple of the vector $w_j$ 
which we obtain from orthogonalizing $Av_j$ against $v_j$ and $v_{j-1}$. Since $\|v_{j+1}\| = 1$, the scalar factor must be $1/\|w_j\|$, which is exactly how the Lanczos process proceeds.
\end{proof}

The $m$-th {\em Lanczos approximation} $x_m$ to the action $f(A)b$ of a matrix function $f(A)$ 
on a vector $b$ is given as
\begin{eqnarray*}
    x_m \, = \, V_mf(V_m^HAV_m)V_m^Hb \, = \, \|b\| \cdot V_mf(T_m)e_1,
\end{eqnarray*}
where the Lanczos process is started with $v_1 = (1/\|b\|)\cdot b$.
The Lanczos approximation is motivated by the fact that it is equivalent to setting $x_m = q_{m-1}(A)b$, where $q_{m-1}$ is the polynomial of degree $m-1$ which interpolates $f$ in the eigenvalues of $T_m$, i.e., the Ritz values of $A$ with respect to the subspace $K_m(A,b)$. For details, cf.\ \cite{Frommer.Simoncini.06,Higham.book08,Saad92,vdV87}. 

In the case of a linear system $Ax = b$ we want to compute $A^{-1}b$, i.e., we have $f(t) = t^{-1}$. 
The $m$-th Lanczos approximation $x_m$ is then given as 
\begin{equation} \label{lanczos_cg:eq}
    x_m = \|b \| \cdot V_m T_{m}^{-1} e_1.
\end{equation}
This is equivalent to the Galerkin condition $b-Ax_m \perp K_m(A,b)$ with $x_m \in K_m(A,b)$. Indeed, if we put $x_m = V_m y_m$ we see that $b-AV_m y_m \perp K_m(A,b)$ iff $y_m$ solves
\[
V_m^H(b-AV_my_m) = 0,
\] 
wherein $V_m^Hb = \| b \| e_1$ and, due to \eqnref{lanczos_rel:eq}, $V_m^HAV_m = T_m$. The Lanczos approximation 
$x_m$ is thus mathematically equivalent to the $m$-th iterate of the CG method with initial guess $x_0 = 0$. Note that 
if one wants to use an initial guess $x_0 \neq 0$, CG iteratively obtains corrections to $x_0$ which are the Lanczos approximations for $A^{-1}b - x_0 = A^{-1}r_0, r_0 = b-Ax_0$. 

The residuals of the CG iterates are related to the
Lanczos vectors as stated in the following lemma, cf.\ \cite{Paige.Parlett.vdVorst.95}.

\begin{lemma} \label{cgres_lanczosvec:lem}
Let $x_m$ be the $m$-th CG iterate and $r_m = b-Ax_m$ its residual. Moreover, let $v_{m+1}$ be the $m+1$-st
Lanczos vector, where the Lanczos process is started with $v_1 = (1/\|r_0\|) \cdot r_0$. Then
\[
    r_m = \rho_m \cdot v_{m+1}
\]
with
\[
\rho_m = - e_m^Hy_m \cdot \|b\| \cdot  \beta_m, \mbox{ where } y_m = T_{m}^{-1}e_1 
\in \mathC^{m}.
\]
Moreover, we have $\rho_m = (-1)^m \| r_m\|$. 
\end{lemma}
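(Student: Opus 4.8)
The plan is to substitute the explicit form of the CG iterate into the definition of the residual and then use the Lanczos relation to collapse everything onto the single vector $v_{m+1}$. With $x_0 = 0$ we have $r_0 = b$, so $v_1 = (1/\|b\|)\cdot b$ and the iterate is $x_m = \|b\|\cdot V_m y_m$ with $y_m = T_m^{-1}e_1$, by \eqnref{lanczos_cg:eq}. First I would compute, using the Lanczos relation \eqnref{lanczos_rel:eq} together with $T_m y_m = e_1$ and $V_m e_1 = v_1$,
\[
A V_m y_m = (V_m T_m + \beta_m v_{m+1}e_m^H)\, y_m = V_m (T_m y_m) + \beta_m (e_m^H y_m)\, v_{m+1} = v_1 + \beta_m (e_m^H y_m)\, v_{m+1}.
\]
Multiplying by $\|b\|$ and subtracting from $b = \|b\|\, v_1$ gives $r_m = b - Ax_m = -\|b\|\,\beta_m\, (e_m^H y_m)\, v_{m+1}$, which is exactly $r_m = \rho_m v_{m+1}$ with $\rho_m = - e_m^H y_m \cdot \|b\| \cdot \beta_m$. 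This part is routine algebra.

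The substantive part is the sign statement $\rho_m = (-1)^m\|r_m\|$. Since $v_{m+1}$ is a unit vector we have $\|r_m\| = |\rho_m|$, so it suffices to determine the sign of $\rho_m$, equivalently the sign of the last entry $e_m^H y_m = e_m^H T_m^{-1} e_1 = (T_m^{-1})_{m,1}$; note $\|b\| > 0$ and, away from breakdown, $\beta_m > 0$. I would evaluate $(T_m^{-1})_{m,1}$ by Cramer's rule: it equals $(-1)^{m+1} M_{1,m}/\det T_m$, where $M_{1,m}$ is the minor obtained by deleting the first row and the last column of $T_m$. The key observation is that, because $T_m$ is tridiagonal, the entries of this minor are $T_{i+1,j}$ for $i,j \in \{1,\ldots,m-1\}$, which vanish whenever $i > j$ and equal $\beta_i$ when $i = j$; hence the minor is upper triangular with diagonal $\beta_1,\ldots,\beta_{m-1}$, so that $M_{1,m} = \prod_{i=1}^{m-1}\beta_i > 0$. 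Combined with $\det T_m > 0$ (the matrix $T_m = V_m^H A V_m$ inherits positive definiteness from the hpd matrix $A$), this shows $\sign\big((T_m^{-1})_{m,1}\big) = (-1)^{m+1}$, hence $\sign(\rho_m) = -(-1)^{m+1} = (-1)^m$, giving $\rho_m = (-1)^m\|r_m\|$.

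The hard part will be precisely this sign analysis: one must recognize the triangular structure of the deleted-row-and-column minor of the tridiagonal matrix, so that its determinant is manifestly the product of the positive off-diagonal entries rather than a quantity whose sign is a priori unclear, and then invoke the positive definiteness of $T_m$ to fix the sign of $\det T_m$. Everything else reduces to a direct substitution via the Lanczos relation. An alternative to the cofactor computation would be to track the leading coefficient and sign of the CG residual polynomial through its three-term recurrence, but the minor computation is the most self-contained route.
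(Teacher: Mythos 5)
Your proposal is correct, and it is actually more complete than what the paper itself provides. For the collinearity statement $r_m = \rho_m v_{m+1}$ with $\rho_m = -\|b\|\,\beta_m\, e_m^H T_m^{-1}e_1$, your computation is essentially the paper's: the paper substitutes $x_m = \|b\| V_m T_m^{-1}e_1$ and uses the Lanczos relation in the form $AV_m = V_{m+1}\overline{T}_m$, splitting $\overline{T}_m T_m^{-1}e_1$ into its leading block and the trailing row $\beta_m e_m^H T_m^{-1}$, while you use the equivalent form $AV_m = V_mT_m + \beta_m v_{m+1}e_m^H$; these are the same algebra. The difference is in the sign statement $\rho_m = (-1)^m\|r_m\|$: the paper does not prove it at all, but simply cites Paige, Parlett and van der Vorst, whereas you give a self-contained argument via Cramer's rule, observing that the minor of $T_m$ obtained by deleting the first row and last column is upper triangular with diagonal $\beta_1,\ldots,\beta_{m-1}$, so that
\[
e_m^H T_m^{-1} e_1 \;=\; (-1)^{m+1}\,\frac{\prod_{i=1}^{m-1}\beta_i}{\det T_m},
\]
which has sign $(-1)^{m+1}$ since $T_m = V_m^H A V_m$ is positive definite; combined with $\beta_m > 0$ this gives $\sign(\rho_m) = (-1)^m$. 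I checked this cofactor computation and it is correct (and the degenerate case $\beta_m = 0$ is harmless, since then $r_m = 0$ and the claim is trivial). What your route buys is a short, fully self-contained proof of the sign property that the paper leaves to the literature; what the paper's route buys is brevity, since the cited reference handles the general case $x_0 \neq 0$ as well, while your argument, like the paper's sketch, is written for $x_0 = 0$.
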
 
\begin{proof} All stated results can be found in \cite{Paige.Parlett.vdVorst.95}. As an indication
for the reader we just give a short sketch for the representation of $\rho_m$ in the case $x_0 = 0$:
Using \eqnref{lanczos_rel:eq} the residual $r_m$ of the CG iterates $x_m$ from \eqnref{lanczos_cg:eq} are given as
\begin{eqnarray*}
   b-Ax_m &=& b-\|b\| \cdot AV_mT_m^{-1}e_1 \, = \, b - \|b\| \cdot V_{m+1}\overline{T}_mT_m^{-1}e_1 \\
          &=& \|b\| \cdot V_{m+1} \left( e_1 -\overline{T}_mT_m^{-1}e_1 \right) \, = \, \|b\| \cdot 
   V_{m+1} \left( e_1 -  \begin{pmatrix} I \\ 
                       \beta_m e_m^H T_m^{-1} 
                       \end{pmatrix}
                e_1 \right) \\
     &=&  - \|b\| \cdot \beta_m \cdot (e_m^H T_m^{-1} e_1) \cdot v_{m+1}.
\end{eqnarray*}
\end{proof} 

There are various ways to cheaply update the Lanczos approximation $x_m$  from 
\eqnref{lanczos_cg:eq} to $x_{m+1}$. The standard way is to update the (root-free) Cholesky factorization
of $T_m$ to one of $T_{m+1}$, thus arriving at the familiar coupled two-term recurrence
of the CG algorithm; see \cite{Saad-book2}, e.g. Another possibility is to use the fact that $v_m = \hat{p}_m(A)b$ where $\hat{p}_m$ is the characteristic polynomial of $T_m$. The Lanczos relation \eqnref{lanczos_rel:eq} gives a 
three-term recurrence for $\hat{p}_m$. By Lemma~\ref{cgres_lanczosvec:lem}, we have 
$r_m = p_m(A)b$ with
$p_m(t) = \rho_m\hat{p}_m(t)$, $\rho_m = 1/\hat{p}_m(0)$. Since $x_m = q_{m-1}(A)b$ with $p_m(t) = 1 - tq_{m-1}(t)$, the recurrence 
for the $p_m$ implies 
one for the iterates $x_m$. Note that $\hat{p}_m(0) \neq 0$, since the zeros of $p_m$ are the 
eigenvalues of $T_m$ and thus contained in $[\lambda_{\min},\lambda_{\max}]$. We refer to \cite{Saad-book2} for a more detailed description of this approach. For future reference, this three-term recurrence variant of the CG method is given in Algorithm~\ref{algo:cg-lanczos}. 

\begin{algorithm}
  \caption{CG Lanczos (initial guess is zero)}
  \label{algo:cg-lanczos}
  set $x_{-1} = 0$, $\rho_0 = \| b \| $, $ \tau_0 = 1$, $v_1 = (1/\rho_0)b$ \;
  \For{ $j = 0, 1, \dots$ }{

    compute $\alpha_{j+1}$, $\beta_{j+1}$, $v_{j+2}$ using the Lanczos process for $A$

    \If{ $j > 0 $ }{
      $\tau_j = \left[ 
        1 - 
        \frac{ \alpha_j }{ \alpha_{j+1} }
        \frac{ \rho_j^2 }{ \rho_{j-1}^2 }
        \frac{ 1 }{ \tau_{j-1} }
      \right]^{-1} $
    }
    $ \rho_{j+1} 
    = - \tau_j \rho_j \tfrac{ \beta_{j+1} }{ \alpha_{j+1} }$\;
    $ x_{j+1} = \tau_j (x_j + \tfrac{1}{\alpha_{j+1}} r_j) + (1 - \tau_j) x_{j-1} $ \;
    $ r_{j+1} = \rho_{j+1} v_{j+2} $ \;
  }
\end{algorithm}

In our context, the major advantage of Algorithm~\ref{algo:cg-lanczos} is that it easily also produces the Lanczos approximations for systems of the form $(A-\sigma I)x = b$ if $\sigma \not \in [\lambda_{\min},\lambda_{\max}]$ and thus, in particular, if $\sigma$ is not real.
Indeed, as was observed in \cite{Fro03,FroMa99}, e.g., due to Lemma~\ref{invariance:lem}
the characteristic polynomial $\hat{p}_m^{\sigma}$ for the shifted system is related to that of the non-shifted system via $\hat{p}_m^{\sigma}(t) = \hat{p}_m(t-\sigma)$. Since $\hat{p}_m^{\sigma}(0) 
= \hat{p}_m(-\sigma) \neq 0$, we see that all Lanczos approximations are well-defined 
and that we can work out the three term recurrence for the Lanczos approximations in exactly the same manner as in the case without the shift $\sigma$. We refer to \cite{freu90} to yet another breakdown free variant, based on a short recurrence update for QR-factorizations of the matrices $T_m$. For the case of real shifts and the standard coupled two-term recurrence, see also \cite{vdEshof.Sleijpen.04ANM}.

Now, let $f: t \to \sum_{i=1}^p \frac{\omega_i}{t-\sigma_i}$ be a rational function with poles $\sigma_i$ outside
the interval $[\lambda_{\rm min},\lambda_{\rm max}]$. Complex poles $\sigma_i$ arise quite naturally in applications, such as rational approximations to the exponential function. 
Let $v_1 = (1/\|b\|)\cdot b$ be the normalized vector for $b$ 
with which we start the Lanczos process. From the shift invariance, Lemma~\ref{invariance:lem}(i), it follows
that the $m$-th Lanczos approximation $x_m$ to $f(A)b$ is given as 
\begin{equation} \label{lanczos_rat:eq}
x_m = \|b \| \cdot V_m \cdot \left( \sum_{i=1}^p \omega_i (T_m-\sigma_iI)^{-1} e_1 \right).
\end{equation}
This Lanczos approximation always exists, i.e., all matrices $T_m - \sigma_i I$ are non-singular, since the spectrum of $T_m$ is contained in 
$[\lambda_{\rm min},\lambda_{\rm max}]$ and $\sigma_i \not \in [\lambda_{\rm min},\lambda_{\rm max}]$.
From the shift invariance property and relation \eqnref{lanczos_cg:eq}, 
we see that the Lanczos iterate $x_m$ from \eqnref{lanczos_rat:eq} is just the linear combination 
\begin{equation} \label{lincomb:eq}
x_m = \sum_{i=1}^p \omega_i x_m^{(i)}
\end{equation} 
of the Lanczos approximations $x_m^{(i)}$ for the solutions of the systems $(A-\sigma_i I)x = b$ 
(with initial guess $x_0^{(i)} = 0$ for all $i$). 

By the preceeding discussion, all Lanczos approximations can be obtained via Algorithm~\ref{algo:cg-lanczos}; and by Lemma~\ref{invariance:lem}(i) we get 
the same Lanczos vectors $v_j$, independently of the shift $\sigma_i$. 
Hence we can modify Algorithm~\ref{algo:cg-lanczos} to a {\em multishift} variant, where 
we perform lines 4 to 9 simultaneously for each shift $\sigma_i$ to obtain all $p$ Lanczos approximations $x_m^{(i)}$ 
(and their linear combination $x_m$) using short recurrences and just one matrix-vector multiplication per step. 

The task to which this paper is devoted is to obtain good error estimates for the Lanczos iterates
for a single system $Ax=b$ (see \eqnref{lanczos_cg:eq}), 
or the action of a rational matrix function $f(A)b$ (the Lanczos 
iterates from \eqnref{lanczos_rat:eq}). In the case of the CG iterates for the system $Ax =b$, we can express the error as
\[
  x_*-x_m  = A^{-1}r_m \mbox{ with } r_m = b-Ax_m,
\]
which, using Lemma~\ref{cgres_lanczosvec:lem} results in
\begin{equation} \label{linsys_error:eq}
\| x_*-x_m   \|_2^2 = |\rho_m|^2 \cdot v_{m+1}^HA^{-2}v_{m+1}, \enspace  \| x_*-x_m \|_A^2 = |\rho_m|^2 \cdot v_{m+1}^HA^{-1}v_{m+1}.
\end{equation}

For the Lanczos approximation \eqnref{lanczos_rat:eq} for a rational function 
we can apply 
Lemma~\ref{cgres_lanczosvec:lem} to all systems $(A-\sigma_i I)x^{(i)} = b$ to see that we have
\[
r_m^{(i)} = b - (A-\sigma_i I)x_m^{(i)} = \rho_m^{(i)} v_{m+1}, 
\]
so that it is possible to express the error $f(A)b - x_m =  \sum_{i=1}^p \omega_i (A-\sigma_i I)^{-1} b - x_m$ with
$x_m$ from \eqnref{lincomb:eq} as
\begin{eqnarray*}
\sum_{i=1}^p \omega_i (A-\sigma_i I)^{-1} b - \omega_i x_m^{(i)} & = & \sum_{i=1}^p \omega_i (A-\sigma_i I)^{-1} \left(b-(A-\sigma_i I)x_m^{(i)}\right) \\
&=& \sum_{i=1}^p \omega_i \rho_m^{(i)}(A-\sigma_i I)^{-1} v_{m+1} \\
& = & g_m(A) v_{m+1},
\end{eqnarray*}
where
\begin{equation} \label{gdef:eq}
g_m(t) =  \sum_{i=1}^p \frac{\omega_i \rho_m^{(i)}}{t-\sigma_i}.
\end{equation}
In the case of a rational function we can thus express the square of the 2-norm of the error as
\[
   \left(g_m(A)v_{m+1}\right)^H\left( g_m(A)v_{m+1} \right) = v_{m+1}^Hh_m(A)v_{m+1},
\]
where
\[
h_m(t) = \bar{g}_m(t)\cdot g_m(t)  = |g_m(t)|^2.
\]

We have thus shown that the square of the 2-norm and of the $A$-norm of the error of the CG iterate \eqnref{lanczos_cg:eq}
as well as of the Lanczos approximations \eqnref{lanczos_rat:eq} (only the 2-norm) are given in the form
\[
v^Hh(A)v, 
\]
where $h$ is a known rational function defined on $[\lambda_{\rm min},\lambda_{\rm max}]$ and $v$ is the current 
(the $m+1$-st) Lanczos vector. 

\section{Error bounds and error estimates} \label{bounds:sec} 
In this section we summarize the key aspects of the theory relating moments, quadrature and orthogonal polynomials, 
see \cite{golub-meurant93,golub-meurant97,golub-meurant-book}, which allows us to obtain
estimates and often even lower and upper bounds for quantities of the form $v^Hh(A)v$ and 
thus, in light of the discussion at the end of section~\ref{lanczos:sec}, for the norm of the error of the 
Lanczos approximations \eqnref{lanczos_cg:eq} and \eqnref{lanczos_rat:eq}. The error estimates obtained rely on running  
a new Lanczos process, now starting with $v$. 

Let $(\lambda_i,z_i), i=1,\ldots,n$ denote the eigenpairs of $A$ where the vectors $z_i$ are orthonormal and $\lambda_1 \leq \lambda_2 \leq \ldots \leq \lambda_n$. Expanding
$v$ in terms of the basis $z_i$ we can write
\[
v = \sum_{i=1}^n \gamma_i z_i. 
\]
Since $h(A)v = \sum_{i=1}^n h(\lambda_i) \gamma_i z_i$, cf.\ \cite{Frommer.Simoncini.06,Higham.book08}, we have
\begin{equation} \label{int:eq}
   v^Hh(A)v = \sum_{i=1}^n h(\lambda_i) \cdot |\gamma_i|^2 = \int_{a}^{b} h(t) \ d\gamma(t),
\end{equation}
where $[a,b] \supseteq [\lambda_{\min},\lambda_{\max}]$, the integral is to be understood as a Riemann-Stieltjes integral and the discrete measure $\gamma(t)$ is given as
\[
\gamma(t) = \left\{ \begin{array}{ll}
                  0 & \mbox{if }  t < \lambda_{\rm min} \\
                 \sum_{j=1}^i |\gamma_j|^2 & \mbox{if } \lambda_{i} \leq t < \lambda_{i+1} \\
                 \sum_{j=1}^n |\gamma_j|^2 & \mbox{if } \lambda_{n} \leq t   
                \end{array}
         \right. \ .
\]

We can now use Gauss, Gauss-Lobatto or Gauss-Radau quadrature rules to approximate $\int_{\lambda_{\rm min}}^{\lambda_{\rm max}} h(t) d\gamma(t)$. Algorithmically, evaluating these rules turns out to be very intimately related to the Lanczos process
based on the starting vector $v$. The precise results are as follows, see \cite{golub-meurant93,golub-meurant97,golub-meurant-book}.

\begin{theorem} \label{golub-meurant:thm}
Let $\widetilde{T}_k$ denote the tridiagonal matrix in the Lanczos relation \eqnref{lanczos_rel:eq} arising 
after $k$ steps of the Lanczos process with starting vector $v, \| v \| = 1$. Assume that $h$ is at least 
$2k$ times continuously differentiable on an open set containing $[a,b]$.
\begin{itemize}
  \item[(i)] Approximating \eqnref{int:eq} with the Gauss quadrature rule using  $k$ nodes 
$t_j \in (a,b)$ gives
           \[
                  v^Hh(A)v = e_1^Hh(T^{\rm G}_k)e_1 + R^{\rm G}_k[h], \mbox{ where } T_k^{\rm G} = \widetilde{T}_k,
           \]
           with the error $R^{\rm G}_k[h]$ given as  
\begin{equation} \label{Gauss-err:eq}
   R^{\rm G}_k[h] = \frac{ h^{(2k)}(\xi)}{ (2k)! }
        \int_a^b 
          \left[ \prod_{j = 1}^k (t - t_j) \right]^2
          \ d \gamma (t),
          \quad a < \xi < b 
          \ .
\end{equation}
  \item[(ii)] Approximating \eqnref{int:eq} with the 
             Gauss-Radau quadrature rule using  $k-1$ nodes $t_j \in (a,b)$ 
             with one additional node fixed at $a$ gives 
             \[
                  v^Hh(A)v = e_1^Hh(T_k^{\rm GR})e_1 + R_k^{\rm GR}[h].
              \]
             Here, the tridiagonal matrix $T_k^{\rm GR}$ differs from $\widetilde{T}_k$ in that its $(k,k)$ entry $\alpha_k$ is 
             replaced by $\widetilde{\alpha}_k = a +\delta_{k-1}$, where $\delta_{k-1}$ is the last entry of the vector $\delta$ with $(\widetilde{T}_{k-1}-aI) \delta = \beta_ {k-1}^2 e_{k-1}$.
            The error $R^{\rm GR}_k[h]$ is given as  
\begin{equation} \label{Gauss-Radau-err:eq}
   R_k^{\rm GR}[h] = \frac{ h^{(2k-1)}(\xi)}{ (2k-1)! }
        \int_a^b 
          (t -a) \left[ \prod_{j = 1}^{k-1} (t - t_j) \right]^2
          \ d \gamma (t),
          \quad a < \xi < b 
          \ .
\end{equation}
\item[(iii)] Approximating \eqnref{int:eq} with the 
             Gauss-Lobatto quadrature rule using  $k-2$ nodes $t_j \in (a,b)$ and
             two additional nodes, one fixed at $a$ and one fixed at $b$, gives 
\[
                  v^Hh(A)v = e_1^Hh(T_k^{\rm GL})e_1 + R^{\rm GL}_k[h].
              \]
              Here, the tridiagonal matrix $T_k^{\rm GL}$ differs from $\widetilde{T}_k$ in its last column and row. With $\delta$ and $\mu$ the solutions of the system $(\widetilde{T}_{k-1}-aI)\delta = e_{k-1}$, $(\widetilde{T}_{k-1}-bI)\mu = e_{k-1}$ and $\widetilde{\alpha}_{k}, \widetilde{\beta}_{k-1}^2$ the solution of the linear system 
               \[
                 \left[ \begin{array}{cc} 1 & -\delta_k \\
                                          1 & -\mu_k
                        \end{array}
                  \right] \left[ \begin{array}{c} \widetilde{\alpha}_k \\ \widetilde{\beta}_{k-1}^2 \end{array} \right] = 
                   \left[ \begin{array}{c} a \\ b \end{array} \right] ,
               \]
               the tridiagonal matrix $T_k^{\rm GL}$ is obtained from $\widetilde{T}_k$ by replacing $\alpha_k$ by 
               $\widetilde{\alpha}_k$ and  $\beta_{k-1}$ by $\widetilde{\beta}_{k-1}$. The error $R^{\rm GL}_k[h]$ is given as  
\begin{equation} \label{Gauss-Lobatto-err:eq}
   R_k^{\rm GL}[h] = \frac{ h^{(2k-2)}(\xi)}{ (2k-2)! }
        \int_a^b 
          (t -a)(t-b) \left[ \prod_{j = 1}^{k-2} (t - t_j) \right]^2
          \ d \gamma (t),
          \quad a < \xi < b 
          \ .
\end{equation}
\end{itemize}
\end{theorem}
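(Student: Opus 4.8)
The plan is to reduce all three parts to the classical equivalence between the Lanczos process started at $v$ and the orthogonal polynomials of the measure $\gamma$. Introduce the bilinear form $\langle p,q\rangle_\gamma = \int_a^b p(t)q(t)\,d\gamma(t)$; by \eqnref{int:eq} applied to $h=pq$ this equals $v^Hp(A)q(A)v$ for polynomials $p,q$ with real coefficients (note $p(A)$ is hermitian since $A$ is and the Lanczos coefficients are real). First I would show by induction, using the three-term recurrence \eqnref{lanczos_rel:eq}, that the $j$-th Lanczos vector satisfies $v_j = p_{j-1}(A)v$ with $p_{j-1}$ of exact degree $j-1$. Orthonormality of $v_1,\dots,v_k$ then reads $\langle p_{i-1},p_{j-1}\rangle_\gamma = 1$ for $i=j$ and $0$ otherwise, so $p_0,\dots,p_{k-1}$ are the orthonormal polynomials for $\gamma$ and the Lanczos recurrence is their recurrence: $\widetilde T_k$ is exactly the Jacobi matrix of $\gamma$. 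Its eigenvalues $t_j$ are the Gauss nodes and, writing $\widetilde T_k=\sum_{j=1}^k t_j u_ju_j^H$, the numbers $w_j=|e_1^Hu_j|^2$ are the Gauss weights. This identification is the hinge of the proof and needs only that the Lanczos process does not break down before step $k$, i.e.\ that $v$ has nonzero components in at least $k$ distinct eigenspaces of $A$.

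With this in hand part (i) is immediate: the spectral decomposition gives $e_1^Hh(\widetilde T_k)e_1=\sum_{j=1}^k h(t_j)|e_1^Hu_j|^2=\sum_{j=1}^k w_jh(t_j)$, which is precisely the $k$-node Gauss rule applied to \eqnref{int:eq}; setting $T_k^{\rm G}=\widetilde T_k$ yields the stated identity, with $R_k^{\rm G}[h]$ the quadrature error. For the error representation \eqnref{Gauss-err:eq} I would interpolate $h$ at the $t_j$ by the Hermite polynomial $H$ of degree $2k-1$, exploit that the Gauss rule integrates $H$ exactly, and write the remainder as $h(t)-H(t)=\tfrac{h^{(2k)}(\xi(t))}{(2k)!}\prod_{j=1}^k(t-t_j)^2$; the mean value theorem for Riemann--Stieltjes integrals (the squared node polynomial has constant sign on $[a,b]$) then extracts $h^{(2k)}$ at a single $\xi\in(a,b)$. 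This is where the smoothness hypothesis on $h$ enters.

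For (ii) and (iii) the nodes are constrained: Gauss--Radau prescribes a node at $a$, Gauss--Lobatto at $a$ and $b$. The plan is to characterize the modified matrices by the demand that the prescribed points be their eigenvalues. Writing $T_k^{\rm GR}-aI$ in $2\times2$ block form with leading block $\widetilde T_{k-1}-aI$, the Schur complement shows that $a$ is an eigenvalue iff
\[
\widetilde\alpha_k-a-\beta_{k-1}^2\,e_{k-1}^H(\widetilde T_{k-1}-aI)^{-1}e_{k-1}=0.
\]
Since $(\widetilde T_{k-1}-aI)\delta=\beta_{k-1}^2e_{k-1}$ gives $\delta_{k-1}=\beta_{k-1}^2\,e_{k-1}^H(\widetilde T_{k-1}-aI)^{-1}e_{k-1}$, this collapses to $\widetilde\alpha_k=a+\delta_{k-1}$, the matrix stated in (ii). For Gauss--Lobatto both $\alpha_k$ and $\beta_{k-1}$ are free; imposing that $a$ and $b$ are eigenvalues yields two Schur-complement equations $\widetilde\alpha_k-a=\widetilde\beta_{k-1}^2\delta_{k-1}$ and $\widetilde\alpha_k-b=\widetilde\beta_{k-1}^2\mu_{k-1}$, with $\delta,\mu$ solving the unit--right-hand-side systems of the theorem, which is exactly the displayed $2\times2$ system for $\widetilde\alpha_k,\widetilde\beta_{k-1}^2$.

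The remaining error formulas \eqnref{Gauss-Radau-err:eq} and \eqnref{Gauss-Lobatto-err:eq} follow the same Hermite-interpolation route, now with the extra factors $(t-a)$ and $(t-a)(t-b)$ carried by the prescribed nodes, each of constant sign on $[a,b]$, reducing the degree of exactness to $2k-2$ and $2k-3$ and shifting the derivative order in the mean value step accordingly. I expect the genuine work to lie not in the error estimates, which are standard once the node polynomials are identified, but in the two structural steps: the clean identification of $\widetilde T_k$ with the Jacobi matrix of $\gamma$ (including the non-breakdown condition), and the bookkeeping in (ii)--(iii) linking the last components $\delta_{k-1},\mu_{k-1}$ to the resolvent entries $e_{k-1}^H(\widetilde T_{k-1}-\cdot\,I)^{-1}e_{k-1}$, where sign and indexing slips are easy to make.
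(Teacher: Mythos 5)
The paper offers no proof of Theorem~\ref{golub-meurant:thm} to compare against: the result is stated as known and delegated entirely to the cited references \cite{golub-meurant93,golub-meurant97,golub-meurant-book}. What you have written is, in outline, the standard argument from exactly those sources. Your main structural steps are sound: the identification $v_j = p_{j-1}(A)v$ and $\langle p,q\rangle_\gamma = v^Hp(A)q(A)v$, making $\widetilde T_k$ the Jacobi matrix of the measure $\gamma$ in \eqnref{int:eq}; nodes and weights as eigenvalues and squared first components of the normalized eigenvectors (using $\int d\gamma = \|v\|^2 = 1$); Hermite interpolation plus the weighted mean value theorem (legitimate because the node polynomials $\prod_j(t-t_j)^2$, $(t-a)\prod_j(t-t_j)^2$ and $(t-a)(t-b)\prod_j(t-t_j)^2$ each have constant sign on $[a,b]$) for the remainder formulas \eqnref{Gauss-err:eq}, \eqnref{Gauss-Radau-err:eq}, \eqnref{Gauss-Lobatto-err:eq}; and the Schur-complement computations, where your identities $\delta_{k-1} = \beta_{k-1}^2 e_{k-1}^H(\widetilde T_{k-1}-aI)^{-1}e_{k-1}$ and the two equations $\widetilde\alpha_k - \widetilde\beta_{k-1}^2\delta_{k-1} = a$, $\widetilde\alpha_k - \widetilde\beta_{k-1}^2\mu_{k-1} = b$ reproduce precisely the matrix modifications stated in (ii) and (iii).

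There is one step you assert rather than argue, and it is the only place a referee would push back: in (ii) and (iii) you show that the prescribed points are eigenvalues of the modified matrices, but placing a node at $a$ does not by itself make the rule $h \mapsto e_1^Hh(T_k^{\rm GR})e_1$ the Gauss--Radau rule. You still need that this rule integrates all polynomials of degree $\le 2k-2$ exactly against $d\gamma$ (degree $\le 2k-3$ for Lobatto); that exactness is what licenses the Hermite-interpolation step with the reduced derivative order. Two standard ways to close this: (a) the characteristic polynomial of $T_k^{\rm GR}$ differs from that of $\widetilde T_k$ only by $(\alpha_k - \widetilde\alpha_k)$ times the degree-$(k-1)$ orthogonal polynomial, hence is quasi-orthogonal ($\gamma$-orthogonal to all polynomials of degree $\le k-2$); combine this with Cayley--Hamilton after dividing a test polynomial by it. Or (b), more elementary: in a tridiagonal matrix, a closed walk of length $j$ starting at node $1$ can use the $(k,k)$ entry only if $j \ge 2k-1$, and the $(k-1,k)$ entry only if $j \ge 2k-2$; hence the moments $e_1^H(T_k^{\rm GR})^je_1$ for $j\le 2k-2$ (resp.\ $e_1^H(T_k^{\rm GL})^je_1$ for $j\le 2k-3$) coincide with those of $\widetilde T_k$, which equal $\int t^j\,d\gamma$ by the Gauss exactness you established in part (i). With that inserted, your outline is a complete and correct reconstruction of the cited proof.
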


Inspecting the quadrature error terms $R^{\rm G}_k[h]$, $R^{\rm GR}_k[h]$ and $R^{\rm GL}_k[h]$, we get the following corollary which applies Theorem~\ref{golub-meurant:thm} to the rational functions $h$ through which we expressed the error of the $m$-th Lanczos approximation 
as $v_{m+1}^Hh(A)v_{m+1}$ at the end of section~\ref{lanczos:sec}. The corollary is thus the key to obtaining error bounds for the Lanczos approximations.   

\begin{corollary} \label{bounds:cor}
The estimates $e_1^Hh(T_k^{\rm G})e_1$, $e_1^Hh(T_k^{\rm GR})e_1$ and $e_1^Hh(T_k^{\rm GL})e_1$ 
from Theorem~\ref{golub-meurant:thm} (i), (ii) and (iii), resp., represent lower or upper bounds for \eqnref{int:eq} if the derivatives $h^{(2k)}, h^{(2k-1)}$ 
and $h^{(2k-2)}$ have constant sign on the interval $[a,b]$. 

This is true in particular for the rational functions $h(t) =  \rho_m^2 t^{-1}, h(t) = \rho_m^2t^{-2}$ from \eqnref{linsys_error:eq}
as well as $h(t) = g^2_m(t)$ with 
\[
g_m(t) = \sum_{i=1}^p \frac{\omega_i\rho_m^{(i)}}{t-\sigma_i} \enspace \mbox{with } \omega_i \geq 0,
\sigma_i \leq 0, i=1,\ldots,p,
\]
for which $h^{(2k)}(t) \geq 0, h^{(2k-1)} \leq 0$ for $t \in (0,\infty)$ 
and $k \in \mathbb{N}.$
\end{corollary}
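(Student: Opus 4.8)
The plan is to separate the argument into two independent pieces: first the generic sign bookkeeping that turns ``constant sign of a derivative'' into ``lower or upper bound,'' and then the verification of the required derivative signs for each of the three concrete integrands. For the first piece I would simply read off the signs of the remainder terms in Theorem~\ref{golub-meurant:thm}. In each of \eqnref{Gauss-err:eq}, \eqnref{Gauss-Radau-err:eq} and \eqnref{Gauss-Lobatto-err:eq} the remainder is a product of a derivative value $h^{(\cdot)}(\xi)$ with an integral $\int_a^b w(t)\,[\prod_j(t-t_j)]^2\,d\gamma(t)$, in which the squared polynomial factor is nonnegative and $d\gamma$ is a nonnegative (nondecreasing step) measure. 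Hence the integral inherits its sign from the extra weight $w$: for Gauss $w\equiv 1$ and for Gauss--Radau $w(t)=t-a\geq 0$ on $[a,b]$, so both integrals are $\geq 0$, whereas for Gauss--Lobatto $w(t)=(t-a)(t-b)\leq 0$ on $[a,b]$, so that integral is $\leq 0$. Consequently, if the pertinent derivative ($h^{(2k)}$, $h^{(2k-1)}$, resp.\ $h^{(2k-2)}$) has constant sign on $[a,b]$, the corresponding remainder has a definite sign; since $v^Hh(A)v=e_1^Hh(T_k)e_1+R_k$, the quadrature value is a lower bound when $R_k\geq 0$ and an upper bound when $R_k\leq 0$.

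For the scalar integrands $h(t)=\rho_m^2 t^{-1}$ and $h(t)=\rho_m^2 t^{-2}$ I would differentiate directly: from $\tfrac{d^j}{dt^j}t^{-1}=(-1)^j j!\,t^{-j-1}$ and $\tfrac{d^j}{dt^j}t^{-2}=(-1)^j (j+1)!\,t^{-j-2}$ one reads off that every even-order derivative is positive and every odd-order derivative is negative on $(0,\infty)$; multiplication by $\rho_m^2>0$ preserves these signs, which is exactly $h^{(2k)}\geq 0$, $h^{(2k-1)}\leq 0$.

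The substantial case is $h=g_m^2$, and here the cleanest route is to prove that $g_m^2$ is \emph{completely monotonic} on $(0,\infty)$, i.e.\ $(-1)^j(g_m^2)^{(j)}\geq 0$ for all $j\geq 1$; this is precisely the asserted pattern $h^{(2k)}\geq 0$, $h^{(2k-1)}\leq 0$. The decisive observation is that all residual coefficients $\rho_m^{(i)}$ share the \emph{same} sign: for $\sigma_i\leq 0$ each shifted matrix $A-\sigma_i I$ is hpd, so Lemma~\ref{cgres_lanczosvec:lem} applies to every shifted system and gives $\rho_m^{(i)}=(-1)^m\|r_m^{(i)}\|$, with the sign $(-1)^m$ independent of $i$. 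Together with $\omega_i\geq 0$ this means $g_m=(-1)^m\widetilde g$, where $\widetilde g(t)=\sum_{i=1}^p \tfrac{\omega_i|\rho_m^{(i)}|}{t-\sigma_i}$ is a nonnegative combination of the functions $t\mapsto(t-\sigma_i)^{-1}$. Each such function is completely monotonic on $(0,\infty)$ since $\sigma_i\leq 0$ keeps its pole off $(0,\infty)$ (indeed $(-1)^j\tfrac{d^j}{dt^j}(t-\sigma_i)^{-1}=j!\,(t-\sigma_i)^{-j-1}\geq 0$), and complete monotonicity is preserved under nonnegative linear combinations, so $\widetilde g$ is completely monotonic. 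Finally $h=g_m^2=\widetilde g^{\,2}$ is a product of completely monotonic functions, hence completely monotonic by the Leibniz rule, which yields the claim.

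The only genuinely delicate point is the common sign of the $\rho_m^{(i)}$, which is what makes $g_m$ a signed completely monotonic function in the first place; without it the conclusion fails, since a sign-changing $g_m$ can make $(g_m^2)'$ positive somewhere and thereby violate $h^{(1)}\leq 0$. Everything else—the sign of the quadrature kernels and the closure of complete monotonicity under products—is routine, so I expect the main work to lie in carefully invoking Lemma~\ref{cgres_lanczosvec:lem} on the shifted (hpd) systems to pin down that common sign $(-1)^m$.
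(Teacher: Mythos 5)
Your proposal is correct and takes essentially the same route as the paper: the decisive step is invoking Lemma~\ref{cgres_lanczosvec:lem} on the (hpd, since $\sigma_i \leq 0$) shifted systems to get $\sign(\rho_m^{(i)}) = (-1)^m$ independently of $i$, followed by the Leibniz rule for the square $h = g_m^2$ --- your complete-monotonicity framing is just a repackaging of the paper's sign computation $\sign\bigl(g_m^{(\ell)}\bigr) = (-1)^{\ell+m}$. The only difference is that you spell out the remainder-sign bookkeeping from \eqnref{Gauss-err:eq}, \eqnref{Gauss-Radau-err:eq} and \eqnref{Gauss-Lobatto-err:eq} and the elementary derivatives of $t^{-1}$, $t^{-2}$, which the paper dismisses as the non-essential part of the argument.
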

\begin{proof} The only non-trivial part of the corollary concerns the derivatives of $h(t) = 
g_m^2(t)$. We first note that by Lemma~\ref{cgres_lanczosvec:lem}, $\sign(\rho_m^{(i)}) = (-1)^{m}$, independently of $i$. Thus, the derivatives of each of the summands of $g_m(t)$ 
have constant sign on $[0,\infty)$, resulting in 
\[
   \sign\left(\frac{d^{\ell} g_m(t)}{dt^{\ell}} \right) = (-1)^{\ell+m} \mbox{ for all } t \in [0,\infty).
\] 
Using
\[
\frac{d^{\ell}h_m(t)}{dt^{\ell}} = \sum_{j=0}^{\ell} \left( \begin{array}{c} \ell \\ j \end{array} \right) 
\frac{d^{j}g_m(t)}{dt^j} \cdot \frac{d^{\ell-j}g_m(t)}{dt^{\ell-j}} 
\]
we thus see that $d^{\ell} h_m(t)/dt^{\ell} < 0 \; ( > 0)$ for $t \in [0,\infty)$ if $\ell$ is odd  (even). 
\end{proof}

We just note that there is a connection to results from \cite{Druskin08,fro08} on the monotone convergence of the Lanczos approximations. 

For future reference we state the computational cost of the error estimates from Theorem~\ref{golub-meurant:thm} for those functions $h$ of interest in this paper.

\begin{lemma} \label{cost:lem}
Assume that $\widetilde{T}_k$ is given. Let $h(t) = t^{-1}$ or $h(t) = t^{-2}$ or $h(t) = \bar{g}(t)g(t)$ with $g(t) = \sum_{i=1}^p \frac{\omega_i}{t-\sigma_i}$. Then the cost for evaluating the estimates from Theorem~\ref{golub-meurant:thm} (i), (ii) and (iii) 
is $\mathcal{O}(k)$.
\end{lemma}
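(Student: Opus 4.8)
The plan is to reduce each estimate $e_1^H h(T_k) e_1$ from Theorem~\ref{golub-meurant:thm} to the solution of a fixed number of tridiagonal linear systems of size at most $k$, and to observe that each such solve, as well as the assembly of the quadrature matrices themselves, costs only $\mathcal{O}(k)$ because all matrices involved are tridiagonal.

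First I would note that the three matrices $T_k^{\rm G} = \widetilde{T}_k$, $T_k^{\rm GR}$ and $T_k^{\rm GL}$ are real symmetric tridiagonal and can all be assembled from the given $\widetilde{T}_k$ in $\mathcal{O}(k)$ operations. The Gauss matrix $T_k^{\rm G} = \widetilde{T}_k$ requires no work. The Gauss--Radau matrix $T_k^{\rm GR}$ differs from $\widetilde{T}_k$ only in its $(k,k)$ entry $\widetilde{\alpha}_k = a + \delta_{k-1}$, whose computation needs the single tridiagonal solve $(\widetilde{T}_{k-1}-aI)\delta = \beta_{k-1}^2 e_{k-1}$. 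The Gauss--Lobatto matrix $T_k^{\rm GL}$ differs from $\widetilde{T}_k$ only in its last row and column, obtained from the two tridiagonal solves $(\widetilde{T}_{k-1}-aI)\delta = e_{k-1}$, $(\widetilde{T}_{k-1}-bI)\mu = e_{k-1}$ followed by a $2\times 2$ linear system. Since a tridiagonal solve costs $\mathcal{O}(k)$ via Gaussian elimination with no fill-in, and the $2\times 2$ system is $\mathcal{O}(1)$, forming any of the three matrices is $\mathcal{O}(k)$. In what follows I denote the relevant matrix generically by $T$.

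Next I would handle the three choices of $h$. For $h(t) = t^{-1}$ I set $y = T^{-1}e_1$, i.e.\ solve the tridiagonal system $Ty = e_1$ in $\mathcal{O}(k)$, and read off the estimate $e_1^H y$. For $h(t) = t^{-2}$ I use that $T$ is real and symmetric, so with the same $y$,
\[
e_1^H T^{-2} e_1 = \left(T^{-1}e_1\right)^H\left(T^{-1}e_1\right) = \|y\|_2^2 ,
\]
again $\mathcal{O}(k)$. For $h(t) = \bar{g}(t)g(t)$ with $g(t) = \sum_{i=1}^p \omega_i/(t-\sigma_i)$ the decisive observation is that $T$ is Hermitian, whence $g(T)^H = \bar{g}(T)$ and therefore
\[
e_1^H h(T) e_1 = e_1^H g(T)^H g(T) e_1 = \| g(T) e_1 \|_2^2 .
\]
It thus suffices to form the single vector $u = g(T)e_1 = \sum_{i=1}^p \omega_i (T-\sigma_i I)^{-1} e_1$, which amounts to $p$ shifted tridiagonal solves $(T-\sigma_i I)u_i = e_1$, each $\mathcal{O}(k)$, followed by the linear combination and the norm $\|u\|_2^2$, both $\mathcal{O}(k)$.

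Finally I would collect the counts: in every case the estimate is produced by a number of $\mathcal{O}(k)$ tridiagonal solves and $\mathcal{O}(k)$ inner products bounded by a constant that depends only on the fixed number $p$ of poles of the rational function and not on $k$, so the total is $\mathcal{O}(k)$. The one step that deserves care — and which I regard as the heart of the argument — is the Hermitian symmetry reduction $e_1^H \bar{g}(T) g(T) e_1 = \|g(T)e_1\|_2^2$, since it collapses the bilinear form in $h = |g|^2$ into a single sum of $p$ shifted solves rather than a $p\times p$ double sum; everything else is the standard $\mathcal{O}(k)$ cost of tridiagonal elimination.
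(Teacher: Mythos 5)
Your proposal is correct and follows essentially the same argument as the paper's proof: assemble the Gauss, Gauss--Radau and Gauss--Lobatto matrices via $\mathcal{O}(k)$ tridiagonal solves, then reduce each estimate to one (or $p$) tridiagonal solves plus inner products, using $e_1^H \bar{g}(T)g(T)e_1 = \|g(T)e_1\|_2^2$ in the rational case. The only difference is that you spell out the Hermitian-symmetry reduction explicitly, which the paper leaves implicit.
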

\begin{proof}
Solving a linear system with a tridiagonal matrix of size $k$ has cost $\mathcal{O}(k)$. Thus the cost for obtaining the matrices $T_k^{\rm GR}$ and $T_k^{\rm GL}$ from parts (ii) and (iii) is $\mathcal{O}(k)$. Denote by $T_k$ any of the matrices $T_k^{\rm G}, T_k^{\rm GR}$ and $T_k^{\rm GL}$. For the case $h(t) = |\rho_m|^2\cdot t^{-1}$ we have to solve the linear system $T_k y = e_1$ and to compute $e_1^Hy$ which has cost $\mathcal{O}(k)$. Similarly, 
for $h(t)= |\rho_m|^2\cdot t^{-2}$ we have to solve $T_k y = e_1$ and compute $y^Hy$, which has again cost $\mathcal{O}(k)$. Finally, for $h_m(t) = \bar{g}_m(t)g_m(t)$ we have to solve $(T_k - \sigma_i I)y^{(i)} = e_1$ for $i=1,\ldots,p$, compute $y = \sum_{i=1}^p \omega_i y^{(i)}$ and then $y^Hy$, which has total cost $\mathcal{O}(pk)$ which is $\mathcal{O}(k)$ if we consider $p$ as fixed. 
\end{proof}

It is important to note that if one considers evaluating the error estimates for a {\em sequence} of values for $k$, most of the quantities needed can be obtained by an update from $k$ to $k+1$ with cost $\mathcal{O}(1)$ only. For details we refer to \cite{golub-meurant-book}, e.g.

\section{Lanczos restart recovery}
We want to use the results of Theorem~\ref{golub-meurant:thm} to obtain bounds
or estimates for the error of the iterate $x_m$ of the CG iterate
\eqnref{lanczos_cg:eq} or the Lanczos approximation for a rational function
\eqnref{lanczos_rat:eq}. To avoid ambiguities, let us call the Lanczos process 
via which the iterates $x_m$ are obtained the {\em primary}
Lanczos process.
The straightforward way to obtain the error estimates from
Theorem~\ref{golub-meurant:thm} would be to perform $k$ steps of a new, {\em
restarted} Lanczos process which takes the current Lanczos vector
$v_{m+1}$ of the primary process as its starting vector. 
This results in the restarted Lanczos relation
\begin{equation} \label{eq:restartedlanczosrelation}
AV^{\rr}_{k} = V_{k+1}^{\rr}\overline{T}_{k}^{\rr}, \mbox{ where } V_k^{\rr} = \left[v_1^{\rr} \mid \ldots \mid v_k^{\rr}\right], V_{k+1}^{\rr} = \left[ V_k^{\rr} \mid v_{k+1}^{\rr}\right], v_1^{\rr} = v_{m+1},
\end{equation} 
and we can now apply the theorem using the tridiagonal matrix
$T^{\rr}_k$ arising from the restarted process. This is,
however, far too costly in practice: computing the error estimate would
require $k$ multiplications with $A$---approximately the same amount of work
that we would need to advance the primary iteration from step $m$ to $m+k$.    

Fortunately, as we will show now, it is possible to cheaply retrieve the
matrix $T^{\rr}_k$ of the secondary Lanczos process from the matrix
$T_{m+k+1}$ of the primary Lanczos process. This {\em Lanczos restart recovery}
opens the way to efficiently obtain all the error estimates from
Theorem~\ref{golub-meurant:thm} in a retrospective manner: At iteration $m+k$
we get the estimates for the error at iteration $m$ without using any
matrix-vector multiplications with $A$ and with cost $\mathcal{O}(k^2)$, independently
of the system size $n$. 

For $m= 0,1,\ldots$ fixed, we define the tridiagonal matrix $\hat{T}_{2k+1}$ as 
the block of $T_{m+1+k}$ ranging from rows and columns $\max\{1,(m+1)-k\}$ to $(m+1)+k$. This means that $\hat{T}_{2k+1}$ is the trailing $(2k+1) \times (2k+1)$ diagonal sub-matrix of $T_{m+k+1}$, except for $m+1 \leq k$, where its size is
$(m+1)+k \times (m+1)+k$.

The following theorem shows that for Lanczos restart recovery we basically have to run the 
Lanczos process for the tridiagonal matrix $\hat{T}_{2k+1}$, starting with the $k+1$-st unit vector $e_{k+1} \in \mathC^{2k+1}$.

\begin{theorem} \label{lanczos_recovery:thm}
Let the Lanczos relation for $k$ steps of the Lanczos process for $\hat{T}_{2k+1}$ with starting vector 
$e_{k+1} \in \mathC^{2k+1}$ ($e_{m+1} \in \mathC^{m+1+k}$ if $m+1 \leq k$) be given
as
\begin{equation} \label{reduced-lanczos:eq}
  \hat{T}_{2k+1} \hat{V}_k = \hat{V}_{k+1} \widetilde{\overline{T}}_{k}.
\end{equation}
Then the matrix $\widetilde{\overline{T}}_{k}$ is identical to $\overline{T}_{k}^{\rr}$ from the restarted Lanczos relation \eqnref{eq:restartedlanczosrelation},
\begin{equation} \label{recovery:eq}
\overline{T}^{\rr}_{k} = \widetilde{\overline{T}}_{k}.
\end{equation} 
\end{theorem}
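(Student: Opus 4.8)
The plan is to reduce the statement to the essential uniqueness of the Lanczos relation, Lemma~\ref{invariance:lem}(ii). Concretely, I would embed the output $\hat{V}_{k+1}, \widetilde{\overline{T}}_{k}$ of the small process \eqnref{reduced-lanczos:eq} back into $\mathbb{C}^n$ through the central block of primary Lanczos vectors, show that the result is a valid Lanczos relation for $A$ with starting vector $v_{m+1}$, and then invoke uniqueness to identify it with the restarted relation \eqnref{eq:restartedlanczosrelation}, giving \eqnref{recovery:eq}. To set this up (treating the generic case $m+1>k$; the case $m+1\le k$ is analogous, and in fact simpler at the lower boundary), let $S\in\mathbb{R}^{(m+k+1)\times(2k+1)}$ be the index embedding with $Se_i=e_{m-k+i}$, so that $P:=V_{m+k+1}S$ has columns $v_{m+1-k},\ldots,v_{m+1+k}$, satisfies $P^HP=I$ since $V_{m+k+1}$ has orthonormal columns, and obeys $Pe_{k+1}=v_{m+1}$ as well as $\hat{T}_{2k+1}=S^HT_{m+k+1}S$. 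I then define the candidate vectors $w_j:=P\hat{v}_j$, $j=1,\ldots,k+1$, which are automatically orthonormal and satisfy $w_1=v_{m+1}$.

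The key localization step is that $\hat{v}_j\in K_j(\hat{T}_{2k+1},e_{k+1})$, and because $\hat{T}_{2k+1}$ is tridiagonal, this Krylov space spreads support by at most one coordinate per power; hence $\hat{v}_j$ is supported on the central indices $k+1-(j-1),\ldots,k+1+(j-1)$, which for $j\le k$ avoids the two extreme indices $1$ and $2k+1$. Using this I would establish the intertwining relation $AP\hat{v}_j=P\hat{T}_{2k+1}\hat{v}_j$ for $j\le k$. Starting from the primary Lanczos relation written as $AV_{m+k+1}=V_{m+k+1}T_{m+k+1}+\beta_{m+k+1}v_{m+k+2}e_{m+k+1}^H$, one computes
\[
AP\hat{v}_j = V_{m+k+1}T_{m+k+1}S\hat{v}_j + \beta_{m+k+1}\,(e_{m+k+1}^HS\hat{v}_j)\,v_{m+k+2}.
\]
The leakage term vanishes, since $e_{m+k+1}^HS$ reads off the small coordinate $2k+1$, which is zero for $j\le k$. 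For the first term, $S\hat{v}_j$ is supported on big indices within $m+2-k,\ldots,m+k$, so applying the tridiagonal $T_{m+k+1}$ keeps the result supported inside the block $m+1-k,\ldots,m+1+k=\mathrm{range}(S)$; therefore $T_{m+k+1}S\hat{v}_j=SS^HT_{m+k+1}S\hat{v}_j=S\hat{T}_{2k+1}\hat{v}_j$, which yields $AP\hat{v}_j=P\hat{T}_{2k+1}\hat{v}_j$.

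To conclude, I would left-multiply the small relation $\hat{T}_{2k+1}\hat{V}_k=\hat{V}_{k+1}\widetilde{\overline{T}}_{k}$ by $P$ and read it column by column: the $j$-th column of the left-hand side is $P\hat{T}_{2k+1}\hat{v}_j=AP\hat{v}_j=Aw_j$, so that $A[w_1|\cdots|w_k]=[w_1|\cdots|w_{k+1}]\widetilde{\overline{T}}_{k}$. Since $[w_1|\cdots|w_{k+1}]$ has orthonormal columns, $\widetilde{\overline{T}}_{k}$ is tridiagonal with positive off-diagonal entries (being produced by a genuine, breakdown-free Lanczos run on $\hat{T}_{2k+1}$), and $w_1=v_{m+1}$, Lemma~\ref{invariance:lem}(ii) identifies this as the Lanczos relation for $A$ started at $v_{m+1}$, i.e.\ the restarted relation \eqnref{eq:restartedlanczosrelation}. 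Comparing tridiagonal matrices then gives \eqnref{recovery:eq}.

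I expect the main obstacle to be the no-leakage argument at the upper boundary, namely verifying that applying $A$ to $w_k$ produces no component along $v_{m+k+2}$; this is exactly what dictates the choice of block size $2k+1$ and its centering at index $m+1$, ensuring that every $\hat{v}_j$ with $j\le k$ stays in the interior of the block. A secondary point requiring care is the edge case $m+1\le k$: there the lower boundary coincides with the global first index, where $T$ carries no subdiagonal entry below row~$1$, so no leakage can occur at the low end and the upper-end argument remains unchanged.
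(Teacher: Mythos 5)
Your proof is correct, and it is essentially the mirror image of the paper's argument: both rest on the same two ingredients (support propagation under multiplication by a tridiagonal matrix, and the essential-uniqueness Lemma~\ref{invariance:lem}(ii)), but you run the identification in the opposite direction. The paper starts from the restarted relation \eqnref{eq:restartedlanczosrelation}, writes the restarted vectors in the primary basis as $v_i^{\rr} = V_{m+k+1}q_i$, projects with $V_{m+k+1}V_{m+k+1}^H$ to get $T_{m+k+1}Q_k = Q_{k+1}\overline{T}_k^{\rr}$, proves by induction that the $q_i$ are supported in the central band $|j-(m+1)|<i$, restricts to the central $2k+1$ rows, and then invokes uniqueness \emph{in the small space}: the restricted relation must be the Lanczos relation for $\hat{T}_{2k+1}$ started at $e_{k+1}$. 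You instead start from the small relation \eqnref{reduced-lanczos:eq}, embed it into $\mathC^n$ via $P = V_{m+k+1}S$, prove the intertwining $AP\hat{v}_j = P\hat{T}_{2k+1}\hat{v}_j$ through the explicit no-leakage computation, and invoke uniqueness \emph{in the big space}: the lifted relation must be the Lanczos relation for $A$ started at $v_{m+1}$, i.e., the restarted relation. The two support arguments are the same computation viewed dually (your vanishing leakage term $\beta_{m+k+1}(e_{m+k+1}^HS\hat{v}_j)v_{m+k+2}$ is exactly what makes the paper's row restriction legitimate). Your version has the minor advantages that the coordinates of the restarted vectors never need to be introduced, and that the role of the block size $2k+1$ centered at $m+1$ is made completely explicit by the remainder term of the primary Lanczos relation; the paper's version has the advantage that the induction giving the banded structure of $Q_{k+1}$ is stated once and then everything follows by reading off submatrices. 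Your handling of the edge case $m+1\le k$ is also consistent with the paper's (which simply omits it ``for notational simplicity'').
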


\begin{proof}
For notational simplicity, we only consider the case $m+1 >k$ where $\hat{T}_{2k+1}$ 
has its full size $(2k+1) \times (2k+1)$. 
Recall the Lanczos relation for $m+k$ steps of the primary Lanczos process given in  \eqnref{lanczos_rel:eq}, 
\[
  AV_{m+k} = V_{m+k+1}\overline{T}_{m+k}.
\]

Since $v_{m+1} \in K_m(A,v_1)$ we have  $K_{k+1}(A,v_{m+1}) \subseteq K_{m+k+1}(A,v_1)$. Hence we can express the vectors $v_i^{\rr}, i=1,\ldots,k+1$ of the restarted 
Lanczos process, see \eqnref{eq:restartedlanczosrelation}, in terms of a basis of $K_{m+k+1}(A,v_1)$. The columns of $V_{m+k+1}$ form such a basis, i.e., we have
\[
v_i^{\rr} = V_{m+k+1} q_i, \, q_i \in \mathC^{m+k+1}, i=1,\ldots,k+1.
\]
The vectors $q_i$ are orthonormal, since the vectors $v_i^{\rr}$ and the columns of $V_{m+k+1}$ are orthonormal, too. Putting $Q_i = \left[ q_1 \mid \ldots \mid q_i \right] \in \mathC^{(m+k+1) \times i}$ we thus have
\[
 V_i^{\rr} = V_{m+k+1} Q_i, \, i=1,\ldots,k,
\]
so that the restarted Lanczos relation \eqnref{eq:restartedlanczosrelation} can be written as
\begin{equation} \label{intermediate_lanczos:eq} 
AV_{m+k+1}Q_k = V_{m+k+1}Q_{k+1} \overline{T}^{\rr}_{k}.
\end{equation}
All columns of $V_{m+k+1}Q_k = V_{k}^{\rr}$ are from $K_k(A,v_{m+1}) \subseteq K_{m+k}(A,v_1)$,
so the columns of $AV_{m+k+1}Q_k$ are all from $K_{m+k+1}(A,v_1)$, on which the projector $V_{m+k+1}V^H_{m+k+1}$ acts as the identity. From \eqnref{intermediate_lanczos:eq} we therefore get
\[
V_{m+k+1}\underbrace{V^H_{m+k+1}AV_{m+k+1}}_{= T_{m+k+1} \enspace {\rm by~\eqnref{eq:restartedlanczosrelation}}}Q_k = V_{m+k+1}Q_{k+1} \overline{T}^{\rr}_{k},
\]
and since $V_{m+k+1}$ has full column rank we have
\begin{equation} \label{intermediate_lanczos2:eq}
T_{m+k+1}Q_k = Q_{k+1} \overline{T}^{\rr}_{k}.
\end{equation}

\begin{figure}
\begin{center}\input{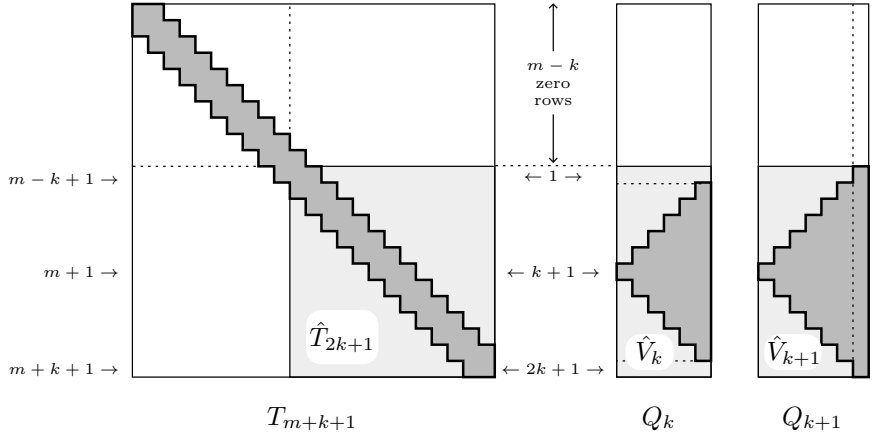}
\end{center}
\caption{Illustration for the proof of Theorem~\ref{lanczos_recovery:thm}. Dark grey: non-zero entries; light grey and indices in the middle: restarted Lanczos; indices on the left: primary Lanczos. \label{illustration:fig}} 
\end{figure}

The matrix $Q_{k+1}$ has a special sparsity pattern:
Since $v_1^{\rr} = v_{m+1}$, we have $q_1 = e_{m+1}$, the $m+1$-st unit vector in
$\mathC^{m+k+1}$.  The matrix $T_{m+k+1}$ being tridiagonal, a trivial induction shows that $q_i$ holds non-zeros only in those components $j$ for which $|j-(m+1)| < i$, see also Figure~\ref{illustration:fig}.
Consequently, $Q_k$ has non-zeros only in rows $m-k+2$ to $m+k$, and $Q_{k+1}$ only in rows $m-k+1$ 
to $m+k+1$. Defining $\hat{V}_{k}$ and $\hat{V}_{k+1}$ as the matrices consisting of the last $2k+1$ rows
(rows $m-k+1$ to $m+k+1$) of $Q_k$ and $Q_{k+1}$, respectively, we see that $\hat{V}_{k+1}$ is identical to  $[ \hat{V}_k \mid \hat{v}_{k+1} ]$, that it has orthonormal columns and that $\hat{v}_1 = e_{k+1}$, the $k+1$-st unit vector in $\mathC^{2k+1}$. Moreover, with $\hat{T}_{2k+1}$ as defined in the theorem, we obtain from \eqnref{intermediate_lanczos2:eq} 
\[
\hat{T}_{2k+1}\hat{V}_{k} = \hat{V}_{k+1} \overline{T}^{\rr}_k.
\]
Due to the essential uniqueness of the Lanczos relation, Lemma~\ref{invariance:lem}(ii),
this finishes our proof. \end{proof} 

The above theorem shows that we can retrieve ${T}_{k}^{\rr}$ from ${T}_{m+k+1}$ by performing $k$ steps of
the Lanczos process for the $(2k+1)\times(2k+1)$ tridiagonal matrix $\hat{T}_{2k+1}$. Here each step needs
$\mathcal{O}(k)$ operations\footnote{Since $\tilde{v}_j$ is non-zero only in positions $k+1-(j-1),\ldots,k+1+(j-1)$, step $j$ actually has only cost $\mathcal{O}(j)$. This refined analysis does, however, not affect the $\mathcal{O}$-analysis of the total cost}, so that the overall cost for computing ${T}_{k}^{\rr}$ is $\mathcal{O}(k^2)$. Together with 
Lemma~\ref{cost:lem} we conclude that the total cost for computing the error estimates from Theorem~\ref{golub-meurant:thm} 
is also $\mathcal{O}(k^2)$ . 

Algorithm~\ref{LanczosPlusEstimates:alg} shows how we suggest to use the results exposed so far. It computes the Lanczos
approximations $x_m$ for $g(A)b$ with $g(t) = \sum_{i=1}^p
\frac{\omega_i}{t-\sigma_i}$ and the estimates $\ell_{m-k}, u_{m-k}$ for the error at
iteration $m-k$ based on the Gauss and the Gauss-Radau rule. By Corollary~\ref{bounds:cor},
these estimates represent lower and upper bounds, respectively, if all poles $\sigma_i$ are negative and $\omega_i \geq 0$ for all $i$. The algorithm can be modified to also obtain error estimates or bounds based on the Gauss-Lobatto rule and to get bounds for the $A$-norm in case we deal with a linear system.

\begin{algorithm}
  \caption{Lanczos approximations for rational function with 2-norm error estimates/bounds }
  \label{LanczosPlusEstimates:alg}
  set $x_{-1} = 0$, $\rho_0 = \| b \| $, $ \tau_0 = 1$ \;
  choose $k$ \;
  \For{ $m = 0, 1, \dots$ }{

    compute $\alpha_{m+1}$, $\beta_{m+1}$, $v_{m+2}$ using the Lanczos process for $A$ \;
    \For(\tcc*[f]{loop over poles}){ $i=1,\ldots,p$}{
    \If{ $m > 0 $ }{
      $\tau_m^{(i)} = \left[ 
        1 - 
        \frac{ \alpha_m -\sigma_i }{ \alpha_{m+1} -\sigma_i}
        \left( \frac{ \rho_m^{(i)} }{ \rho_{m-1}^{(i)} } \right)^2
        \frac{ 1 }{ \tau_{m-1}^{(i)} }
      \right]^{-1} $
    }
    $ \rho_{m+1}^{(i)} 
    = - \tau_m^{(i)} \rho_m^{(i)} \tfrac{ \beta_{m+1} }{ \alpha_{m+1} -\sigma_i}$\;
    $ x_{m+1}^{(i)} = \tau_m^{(i)} \left(x_m^{(i)} + \tfrac{\rho_m^{(i)}}{\alpha_{m+1}-\sigma_i} v_{m+1}\right) + \left(1 - \tau_j^{(i)}\right) x_{m-1}^{(i)} $ \;
   }
   $x_{m+1} = \sum_{i=1}^p \omega_i x_{m+1}^{(i)}$ \;
   \If{$m > k$} {
      perform $k$ steps of the Lanczos process for the trailing $(2k+1) \times (2k+1)$ diagonal sub-matrix of $T_{m+1}$, 
      this yields the tridiagonal matrix $T_k^{\rr} \in \mathC^{k \times k}$ \;
      $\ell_{m-k} = \| g_m({T}^{\rr}_k)e_1\|_2  $ \tcc*{$g_m$ is given in \eqnref{gdef:eq}} 
      $u_{m-k} =  \| g_m({T}^{\rm GR}_k)e_1\|_2 $ \tcc*{$\widetilde{T}^{\rm GR}_k$ given in Theorem~\ref{golub-meurant:thm}(ii)}  
   }
  }
\end{algorithm}

\section{Comparison with existing methods} \label{comparison:sec}
Let us first consider a single linear system $Ax=b$. If we solve this system via the CG method, we (implicitly) perform a ``primary'' Lanczos process. Assume that $n$ iterations give the exact solution $x_n = \|b\| \cdot V_n T_n^{-1}e_1$, see \eqnref{lanczos_cg:eq}. Then the $A$-norm of the  error of the $m$-th iterate, $\|x_m-x_n\|_A$, can be expressed as
\[ \label{cg_error_A:eq}
   \|x_m-x_n\|_A^2 = \| b \|^2 \cdot \left( e_1^HT_n^{-1}e_1 - e_1^H T_m^{-1}e_1 \right).
\]
The matrix $T_m$ is available, in principle, from the CG iteration. Its Cholesky factorization
can be updated easily from one step to the next. Also, it can be shown that $e_1^H T_m^{-1}e_1$ is a positive number which increases montonically with $m$. This implies that $\eta_{k,m} := \|b\|^2\cdot (e_1^HT_{m+k}^{-1}e_1 - e_1^HT_m^{-1}e_1)$ is a lower bound for $\|x_m-x_n\|_A^2$ for any $k > 0$. The challenge is to obtain a numerically stable way to update $\eta_{k,m}$ as the CG iteration proceeds. Starting with \cite{daesgo72,dagona78}, many papers have been devoted to this topic, see \cite{golub-meurant97,Golub.Strakos.94,Meurant.97,Meurant.99,Meurant.05,Strakos.Tichy.02,Strakos.Tichy.05}, summarized in Golub's and Meurant's book \cite{golub-meurant-book}. In order to also obtain {\em upper} bounds for the $A$-norm of the error---provided bounds on the spectrum of $A$ are known---the approach sketched so far can be extended to include Gauss-Radau and Gauss-Lobatto type estimates by (implicitly) using the matrices $T_k^{\rm GR}$ and $T_k^{\rm GL}$ defined in 
Theorem~\ref{golub-meurant:thm}. Meurant's CGQL algorithm (CG with Lanczos quadrature) 
from \cite{Meurant.97} (see also \cite{golub-meurant-book} and \cite{meurant-tichy11}) does so and thus computes upper and lower bounds for the $A$-norm of the error along with the CG iterates.  
As was already noted in \cite{hest-stiefel}, $\eta_{k,m}$ can be computed using just the CG coefficients of iterations $m$ to $m+k$. This is based on the fact that the CG-algorithm in its standard form updates the iterates as $x_{m+1} = x_m + \gamma_m p_m$,
where the search directions are $A$-orthogonal so that $ \| x_{m+k} - x_m\|_A^2 = \sum_{i=0}^{k-1} 
|\gamma_{m+i}|^2 \cdot \|p_{m+i}\|_A^2$. The CGLQ algorithm \cite{Meurant.97} implements this approach, and the analysis by
Strako\v{s} and Tich\'y \cite{Strakos.Tichy.02,Strakos.Tichy.05} shows that this approach represents to date the most stable computation of $\eta_{k,m}$, because problems due to loss of orthogonality in the primary Lanczos process are eliminated. Recently,  the paper \cite{meurant-tichy11} shows how to transport this approach to also obtain upper bounds on the $A$-norm of the error. In all these approaches, the additional cost for getting the estimates and bounds is $\mathcal{O}(k)$ per iteration. We note that the approach presented in the present work also respects the philosophy, motivated in \cite{Strakos.Tichy.05}, e.g., to rely on local orthogonality only. Indeed, we just work with the quantities from iterations $m-k,\ldots,m+k$ when computing the error estimate for iteration $m$.  
 
For the 2-norm of the error one can use the relation, cf.\ \cite[Corollary 21.7]{golub-meurant-book},
\[
\|x_m - x_n \|^2 = \| b \|^2 \cdot \left( e_1^HT_n^{-2}e_1 - e_1^HT_m^{-2}e_1\right) - 2 \frac{e_m^HT_m^{-2}e_1}{e_m^HT_m^{-1}e_1} \| x_m-x_n\|^2_A.
\]
and proceed in a similar manner as before. This is explained in detail in \cite{golub-meurant-book}; the resulting estimate is not necessarily a lower bound any more. Another estimate for the $2$-norm of the error was proposed in \cite{Strakos.Tichy.02}. It involves the CG coefficients of iterations $m$ to $m+2k$ to obtain an estimate for the error at iteration $m$ and it was shown there that this estimate is actually a lower bound. Note that none of the approaches 
for the 2-norm estimates has a ``Gauss-Radau'' counterpart which would allow for estimates that represent upper bounds. 

For the case of a rational matrix function $g(A)b$, with iterates obtained via the Lanczos approximation \eqnref{lanczos_rat:eq}, the paper \cite{FrSi07} extends the approach of \cite{Strakos.Tichy.02} to get estimates for the 2-norm of the error. If all poles $\sigma_i$ are negative and all coefficients $\omega_i$ are positive, these estimates were proven to be lower bounds in \cite{FrSi07}. If there are complex poles, the estimates in \cite{FrSi07} were derived 
for the CG method based on the bilinear form $\langle x,y\rangle = y^Tx$ rather than $y^Hx$. Again, there is no variant which would compute upper bounds for the error in the 2-norm. Therefore, in \cite{FrSi09}, a different approach was used to get an upper bound: Using a global optimization algorithm, the 
maximum $c_m = \max_{t \in [\lambda{\min},\lambda_{\max}]}|g_m(t)|$ for $g$ from \eqnref{gdef:eq} is bounded from above by $\bar{c}_m$ which then is a bound for the error in the 2-norm. While one succeeds in getting upper bounds for the error with this approach, it is quite costly due to the global minimization, and the upper bounds are not necessarily very precise. 

As an alternative to the quadrature approach for the case of the CG iteration, the paper \cite{brezinski99} suggests to use vector extrapolation on the current residual $r$ and $Ar$ to obtain error estimates. They can be modified to yield bounds, 
provided the smallest and largest eigenvalue of $A$ are known. The computation of $Ar$ can be avoided by using quantities available from the Lanczos process, but the extrapolation requires 
inner products with full vectors. Hence the cost for the error estimates of \cite{brezinski99} is $\mathcal{O}(n)$.

\section{Numerical experiments}
In this section we illustrate the quality of the error bounds developed in this paper for several rational functions arising from applications. All experiments were carried out on a standard workstation using Matlab R2011a.

\begin{figure} 
\centerline{\includegraphics[width=0.48\textwidth]{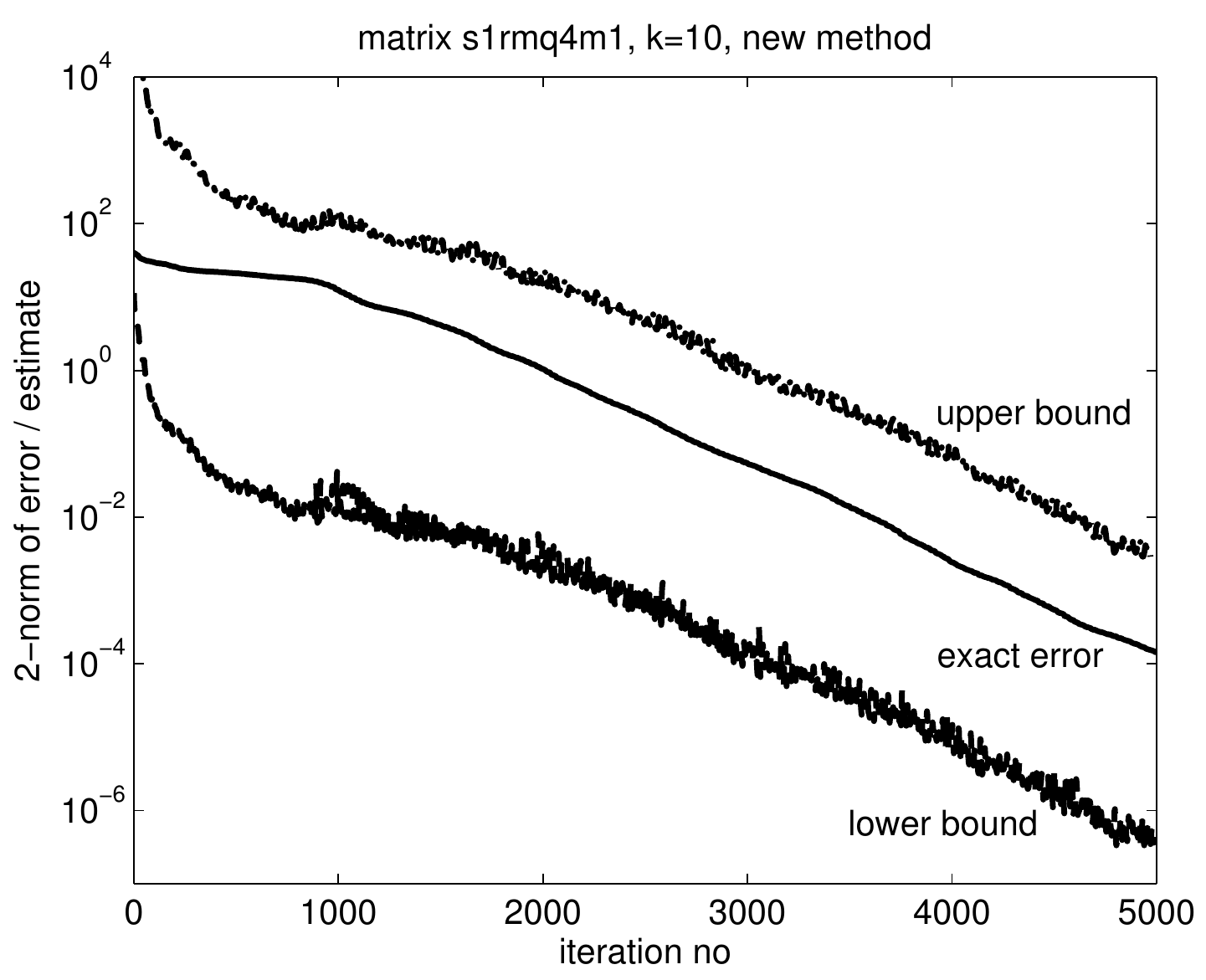}
\hfill \includegraphics[width=0.48\textwidth]{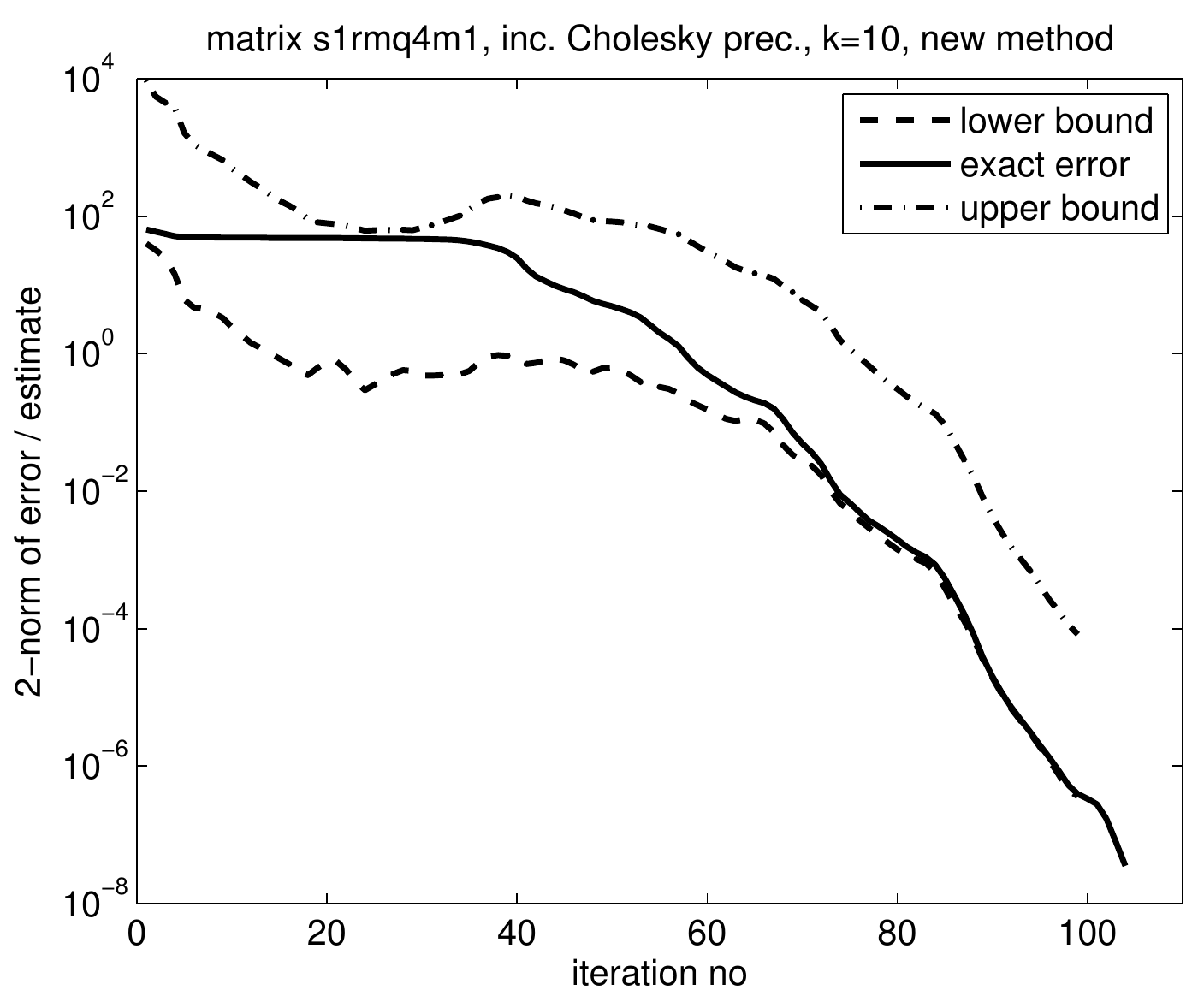}}
\centerline{\includegraphics[width=0.48\textwidth]{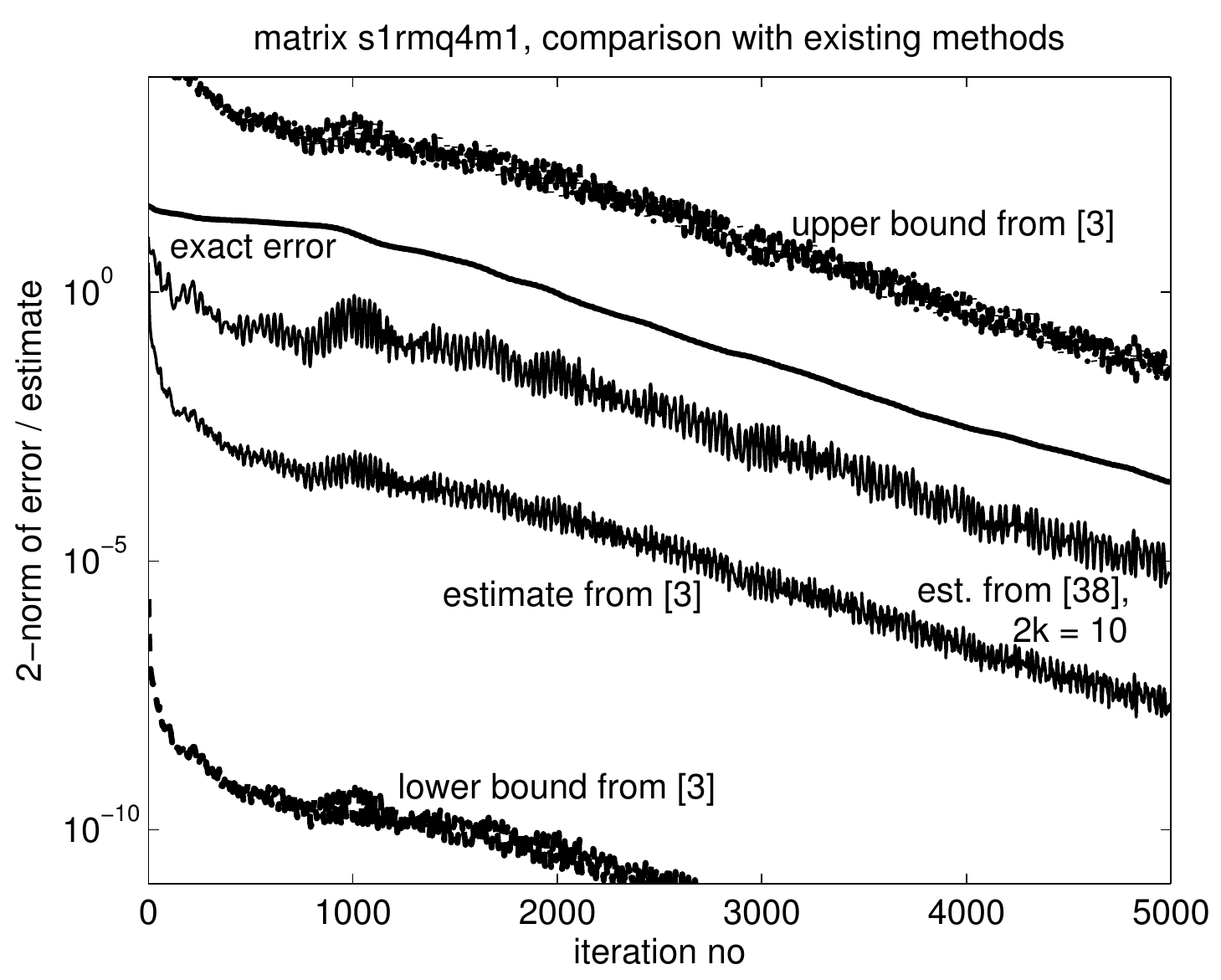}
\hfill \includegraphics[width=0.48\textwidth]{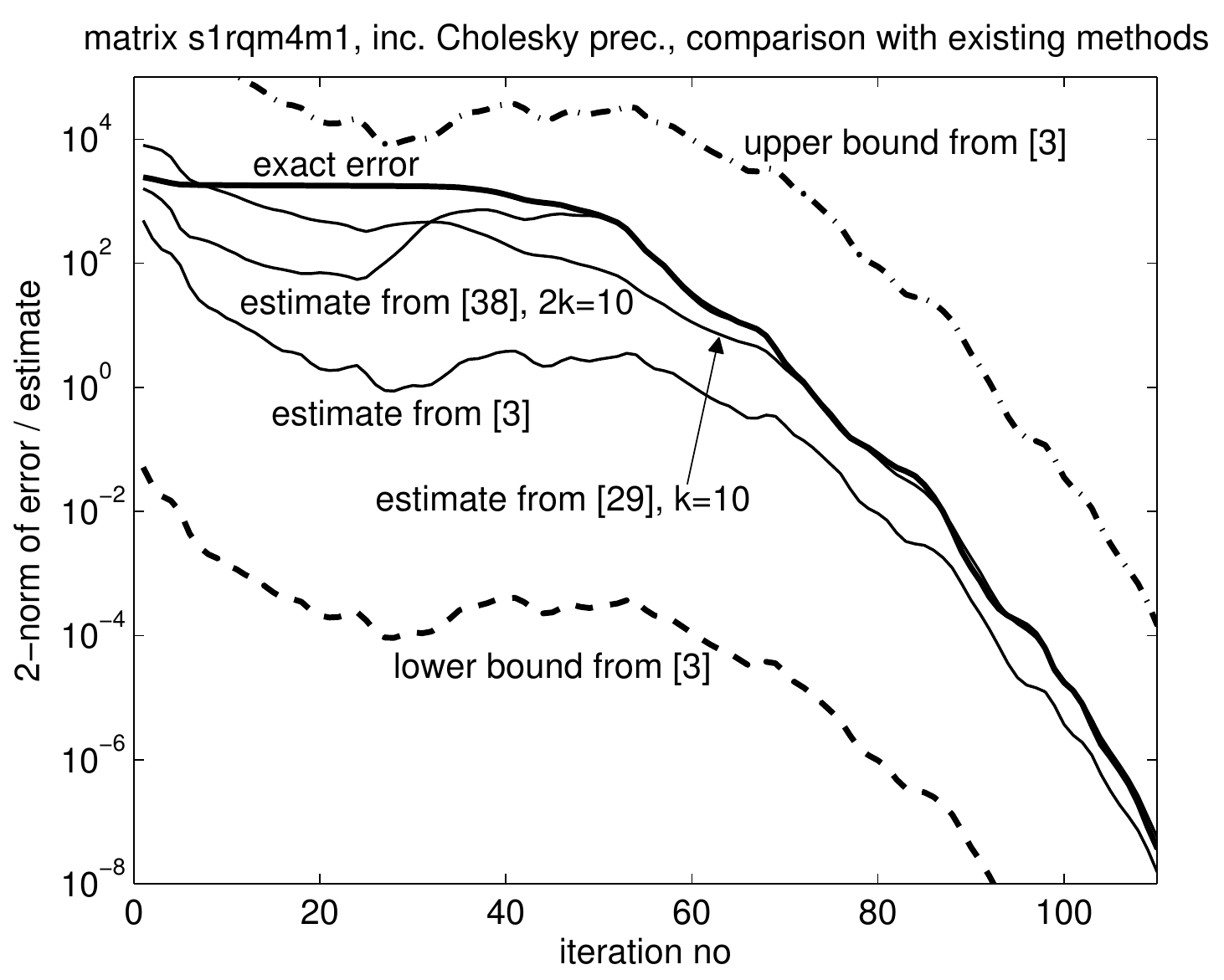}}
\caption{Error bounds and exact error for the CG iterates for $Ax=b$, $A$ matrix {\tt s1rmq4m1} for $k=10$. Left column:
no preconditioning,  right column: incomplete Cholesky preconditioning. Top row: Algorithm~\ref{LanczosPlusEstimates:alg}, bottom row: comparison with other methods.\label{s1:fig}}
\end{figure}

As a first example, we look at the CG iterates for a linear system $Ax = b$, where $A$ is the
matrix {\tt s1rmq4m1} from the matrix group {\tt Cylshell} of the University of Florida matrix collection, see \cite{UFL,Davis-Hu11}. It arises from a finite element discretization of a 
cylindrical shell. We chose this example because of its relatively high condition number which is  
of the order of $10^6$, despite its small size ($n=5,489$).  The solution $x$ was generated randomly, then explicitly computing $b= Ax$ as the right hand side. We do not {\em a priori} know a lower bound $a$ on the spectrum. We thus monitored the smallest eigenvalue of the tridagonal matrix $T_m$ which is known to converge to $\lambda_{\min}$ from above. 
As soon as for some iteration, $k_0$ say, the relative change in the smallest eigenvalue $\underline{\lambda}$ was less than $10^{-4}$, we
put $a = 0.99 \cdot \underline{\lambda}.$ Hence, in principle, we obtain upper bounds only for iterations beyond $k_0$. To have the complete picture, though, we afterwards added the upper bounds obtained with this value for $a$ for iterations 1 to $k_0-1$.  
For the unpreconditioned system (left column of Figure~\ref{s1:fig}) we see that the CG method converges very slowly. Even with $k=10$, the error bounds obtained with Algorithm~\ref{LanczosPlusEstimates:alg} are quite severe over- and underestimations of the exact error. The second row gives error estimates
obtained with other methods. On the one hand, we used the method from \cite{Strakos.Tichy.02} in which we chose the parameter $k=5$, meaning that we use information from the next $2k = 10$  CG iterations, just as we do in Algorithm~\ref{LanczosPlusEstimates:alg} with $k=10$.
On the other hand, we also tested the recommended, third extrapolation based method from \cite{brezinski99}. This method can be modified to also give error bounds, provided we know the extremal eigenvalues of $A$, and the results for these bounds are also given.
We can see that the bounds and estimates obtained with the extrapolation based methods are worse than the bounds obtained via the new approach. The same holds for the other extrapolation based methods from \cite{brezinski99}, for which we do not reproduce the results here.  The estimate from \cite{Strakos.Tichy.02}, which actually is a lower bound in this case, is comparable to and slightly more accurate than the lower bound obtained with Algorithm~\ref{LanczosPlusEstimates:alg}.     
The right column of Figure~\ref{s1:fig} shows that we get much faster convergence and 
much better error bounds  when we use a standard, 
0-fill incomplete Cholesky factorization of $A$ as a preconditioner. This means that we perform Algorithm~\ref{LanczosPlusEstimates:alg} for the matrix $L^{-1}AL^{-H}$ instead of $A$ and $L^{-1}b$ instead of $b$,
where $A = LL^H+R$ is the incomplete Cholesky factorization of $A$. The bound $a$ for the smallest eigenvalue of the preconditioned matrix was obtained as before. A comparison between the new approach (first row) and existing methods (second row) leads to similar conclusions as in the non-preconditioned case. We also included results obtained with the method from \cite{Meurant.05} for comparison. The error estimates obtained for comparable parameters ($k=10$) appear quite similar to 
those obtained with our new method. For the non-preconditioned matrix
the error estimates with the method from \cite{Meurant.05} were extremely small (e.g., $10^{-30}$ after 1000 iterations, and even less for later iterations), so that we did not reproduce them in the left column of Figure~\ref{s1:fig}. We suspect that in this case the bad conditioning induces an instability which we could not avoid although we used the most stable, $QR$-based implementation of the method from \cite{Meurant.05}.

  It can be noticed that for the preconditioned system the upper bounds obtained with Algorithm~\ref{LanczosPlusEstimates:alg} are not as close to the exact error than the lower bounds, whereas it is the other way around for the non-preconditioned system. As a rule, we observed that for better conditioned systems, the lower bounds tend to be closer than the upper bounds, even if we work with very accurate estimates $a$ for the smallest eigenvalue. A theoretical justification for this behavior is still missing. 

Our second example deals with rational approximations to the sign function, as
it is used within the Neuberger overlap operator in lattice QCD. QCD (quantum chromodynamics) 
is the physical theory of quarks and gluons as the constituents of matter. To evaluate this
theory non-perturbatively, one has to work with discretizations on a 4-dimensional space-time lattice amongst which the Wilson fermion  matrix $I - \kappa D_W$ is the most important one. $D_W$ describes a nearest neighbor coupling on an equispaced 4d grid where each grid point holds 12 variables. Recent progress aiming at preserving the physically important
``chiral symmetry'' (cf.\ \cite{Boetal05}) on the lattice lead to Neuberger's overlap operator $D_N$ which has the form $P+\sign(PD_W)$, $P$ a simple
unitary matrix. The matrix $PD_W$ is hermitian and indefinite. In order to solve systems 
with $D_N$ one uses a Krylov subspace method, so that in each step one has to compute
$\sign(PD_W)b$ for some vector $b$. The matrix $Q:= PD_W$ is hermitian and
indefinite.
We report on numerical results obtained with the matrix $D$
available in the matrix group {\tt  QCD} at the UFL sparse matrix collection as configuration {\tt conf5.4-00l8x8-2000.mtx}
with $\kappa_c = 0.15717$. $Q$ is then given as $Q=P(I - \frac{4}{3}\kappa_c D)$,
with $P$ the permutation
\[
P = I_3 \otimes \left( \begin{array}{cccc} 0 & 0 & 1 & 0 \\
                                           0 & 0 & 0 & 1 \\
					   1 & 0 & 0 & 0 \\
					   0 & 1 & 0 & 0
	\end{array}
	\right) \otimes I_{\frac{n}{12}}.
\]
The dimension
of the system is $n = 12\cdot 8^4 \approx 50\, 000$. We compute $\sign(Q)b$
for a randomly generated vector $b$. To this purpose, we first
compute two numbers $0<a_1<a_2$ such that $\spec(Q) \subset [-a_2,-a_1]
\cup [a_1,a_2]$. We then approximate $\sign(t)$ on $[-a_2,-a_1]
\cup [a_1,a_2]$ with the Zolotarev rational approximation,
see \cite{Petrushev.Popov.book87}. It has the form
$\widehat g(t) = \sum_{i=1}^s \omega_i \frac{t}{t^2+\alpha_i},
\omega_i, \alpha_i > 0$ and it is an $\ell_\infty$-best approximation.
The number of
poles $s$ was chosen such that the $\ell_\infty$-error was less
than $10^{-7}$, that is $s=11$.   To compute $\widehat g(Q)b$, we actually computed
$g(Q) \cdot (Qb)$ with
\begin{equation} \label{sqrt:eq}
g(t) = \sum_{i=1}^s \omega_i \frac{1}{t^2+\alpha_i} \, .
\end{equation}
Since $Q^2$ is hermitian  and positive definite, Algorithm~\ref{LanczosPlusEstimates:alg} 
will produce lower and upper bounds for the exact error.
In our computations we used a deflation technique common in realistic QCD computations  \cite{vdEetal02}: 
We precompute the first, $\lambda_1,\ldots,\lambda_q$ say, eigenvalues of smallest modulus. 
With $\Pi$ denoting the orthogonal projection onto the space spanned by the corresponding
eigenvectors, we then have $\sign(Q)b = \sign(Q)(I-\Pi)b + \sign(Q)(\Pi b)$. Here we know
$\sign(Q)(\Pi b)$ explicitly, so that we now just have to approximate $\sign(Q)(I-\Pi)b$.
In this manner, we effectively shrink the eigenvalue intervals for $Q$,
so that we need fewer poles for an accurate Zolotarev approximation and,
in addition, the linear systems to be solved converge more rapidly.
In QCD practice, this approach results in a major speedup,
since $\sign(Q)b$ must usually be computed repeatedly for various vectors $b$. For Algorithm~\ref{LanczosPlusEstimates:alg} it has the additional advantage that we immediately
have a very good value for $a$, the lower bound on the smallest eigenvalue of $Q^2$ for which we can take $\lambda_q^2$.

Figure~\ref{qcd_8:fig} shows the results that we obtain deflating $q=30$ eigenvalues. The (effective) condition number of the (deflated) matrix $Q^2$ is approximately $50,000$. As in our first example,
the top row reports upper and lower bounds from Algorithm~\ref{LanczosPlusEstimates:alg} whereas the bottom row gives the estimates from \cite{FrSi07} which we know to be a lower bound in this case. As before, we see that going from $k=2$ to $10$ results in a significant gain in accuracy. 
 
\begin{figure} 
\centerline{\includegraphics[width=0.48\textwidth]{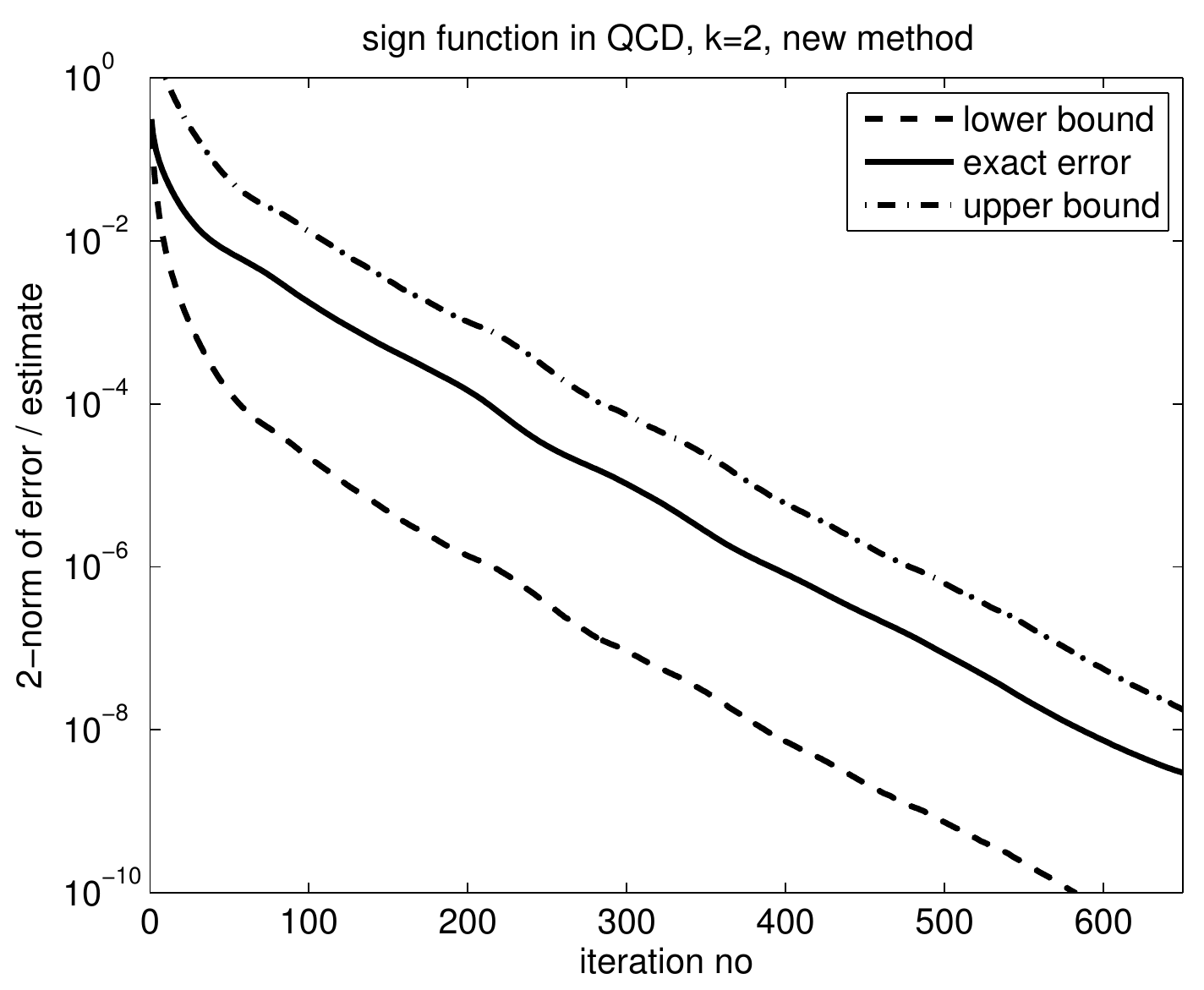}
\hfill \includegraphics[width=0.48\textwidth]{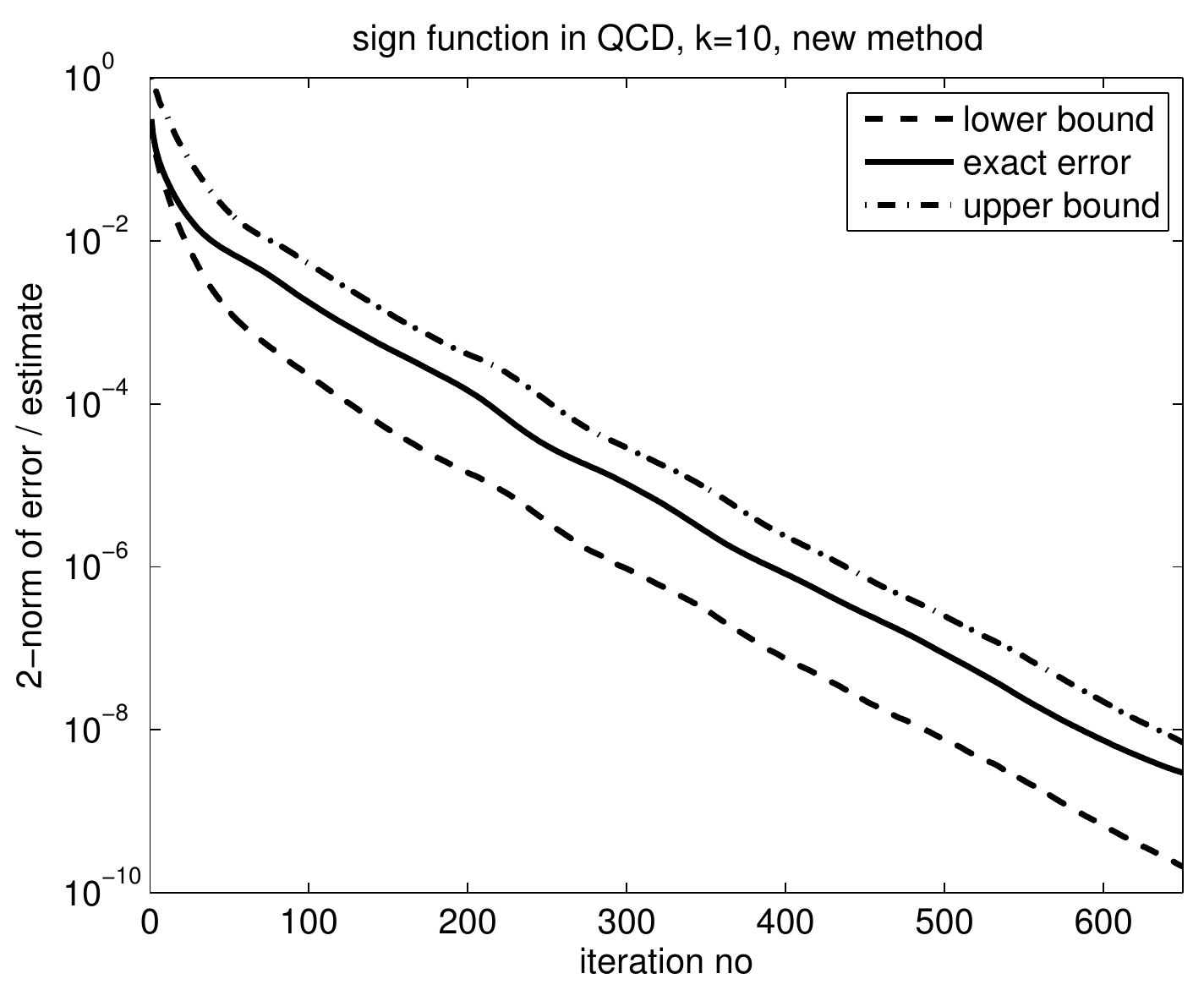}}
\centerline{\includegraphics[width=0.48\textwidth]{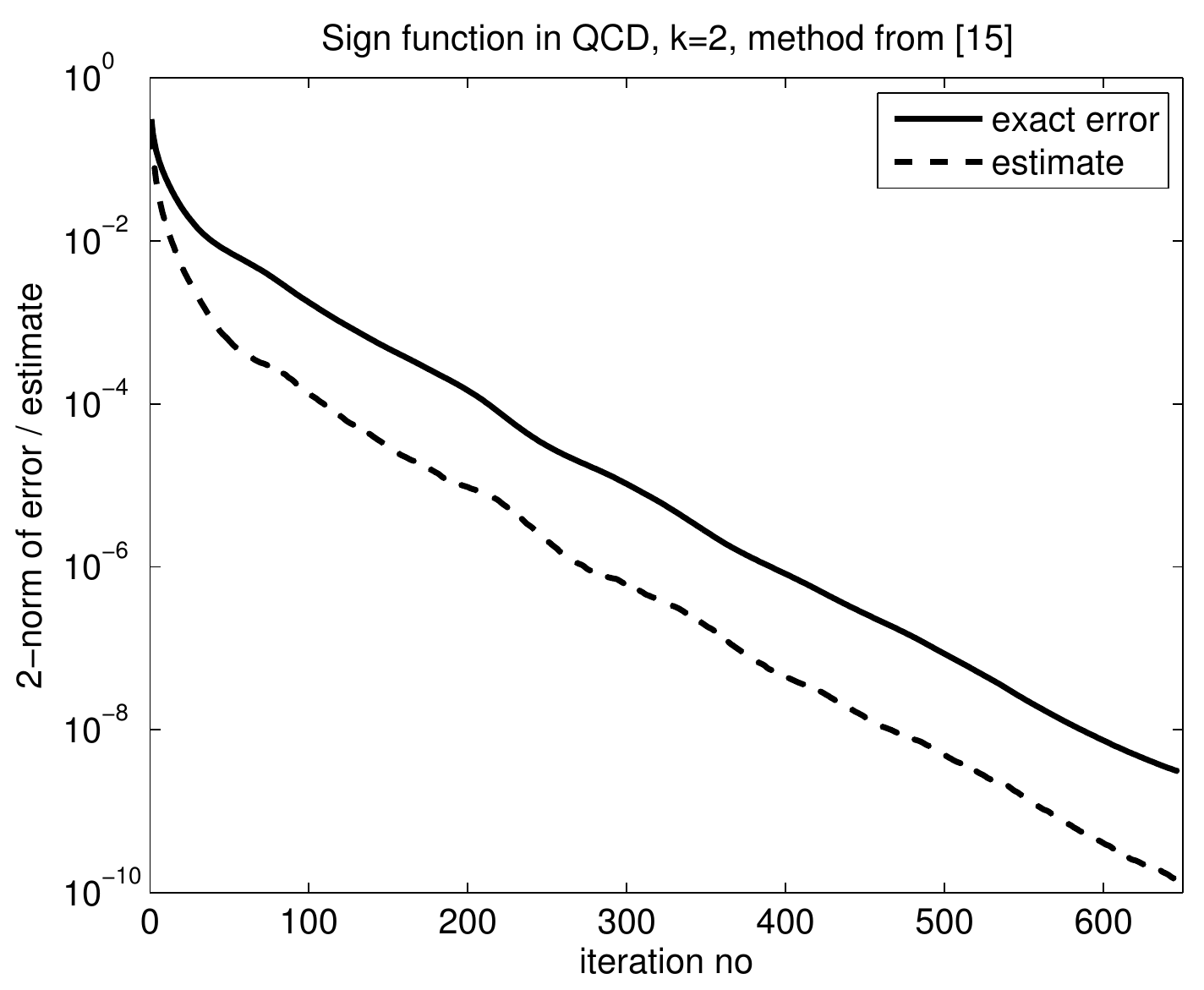}
\hfill \includegraphics[width=0.48\textwidth]{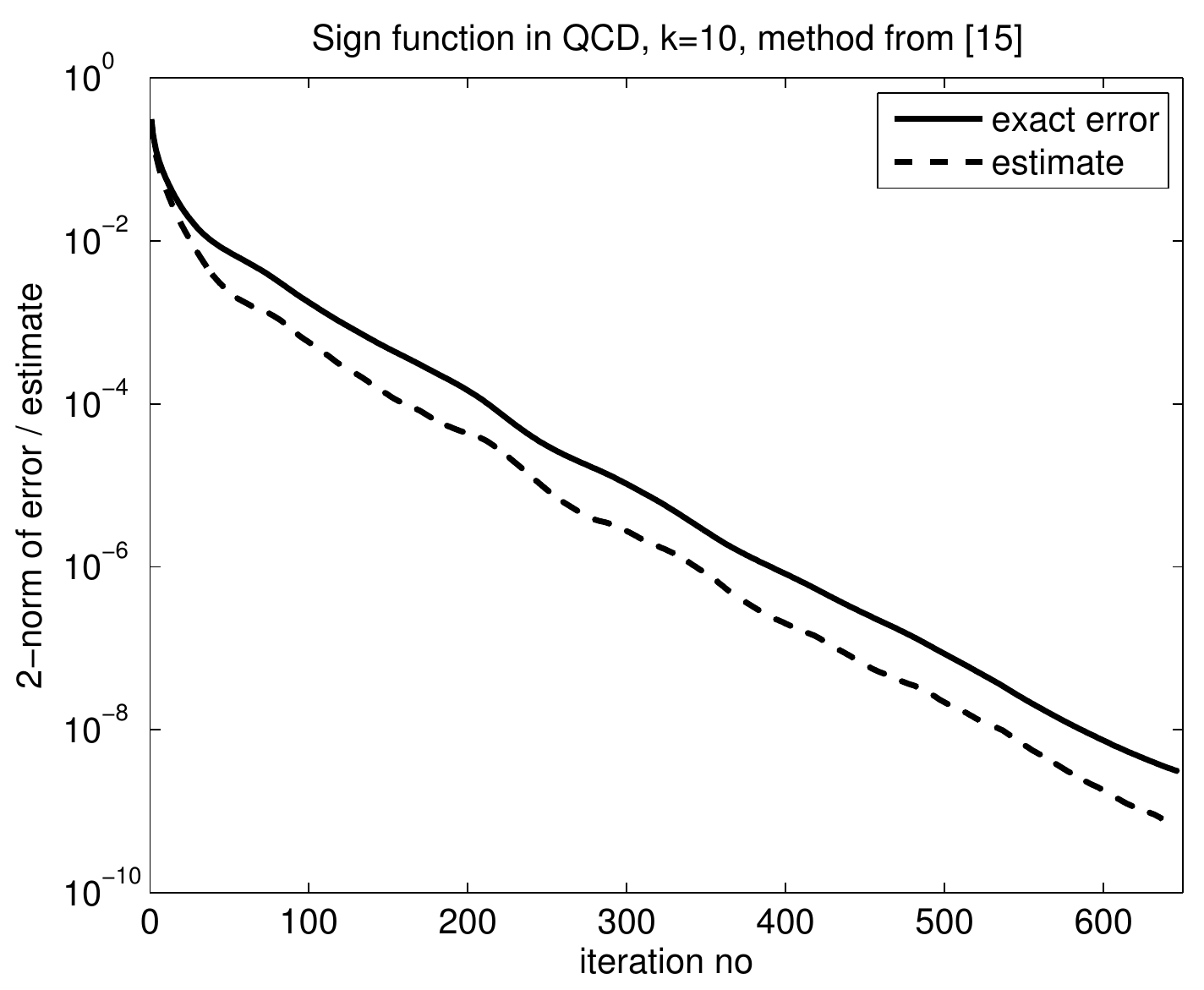}}
\caption{Error bounds and exact error for Zolotarev approximation for $\sign(Q)$ in lattice QCD, 
$8^4$ lattice. Left column: $k=2$, right column: $k=10$. Top row: Algorithm~\ref{LanczosPlusEstimates:alg}, bottom row: method from \cite{FrSi07}\label{qcd_8:fig}}
\end{figure}

Figure~\ref{qcd_16:fig} gives the results for Algorithm~\ref{LanczosPlusEstimates:alg} with $k=10$ for a configuration on a $16^4$ lattice, 
resulting in a matrix $Q$ of dimension $\approx 800,000$. We again deflated the 30 smallest
eigenvalues. The condition number of the deflated matrix $Q^2$ is now approximately $3,700$, i.e., less than 
for the $8^4$ lattice. Therefore, the convergence speed as well as the quality of the bounds are better than 
for the $8^4$ lattice.  

\begin{figure} 
\centerline{\includegraphics[width=0.48\textwidth]{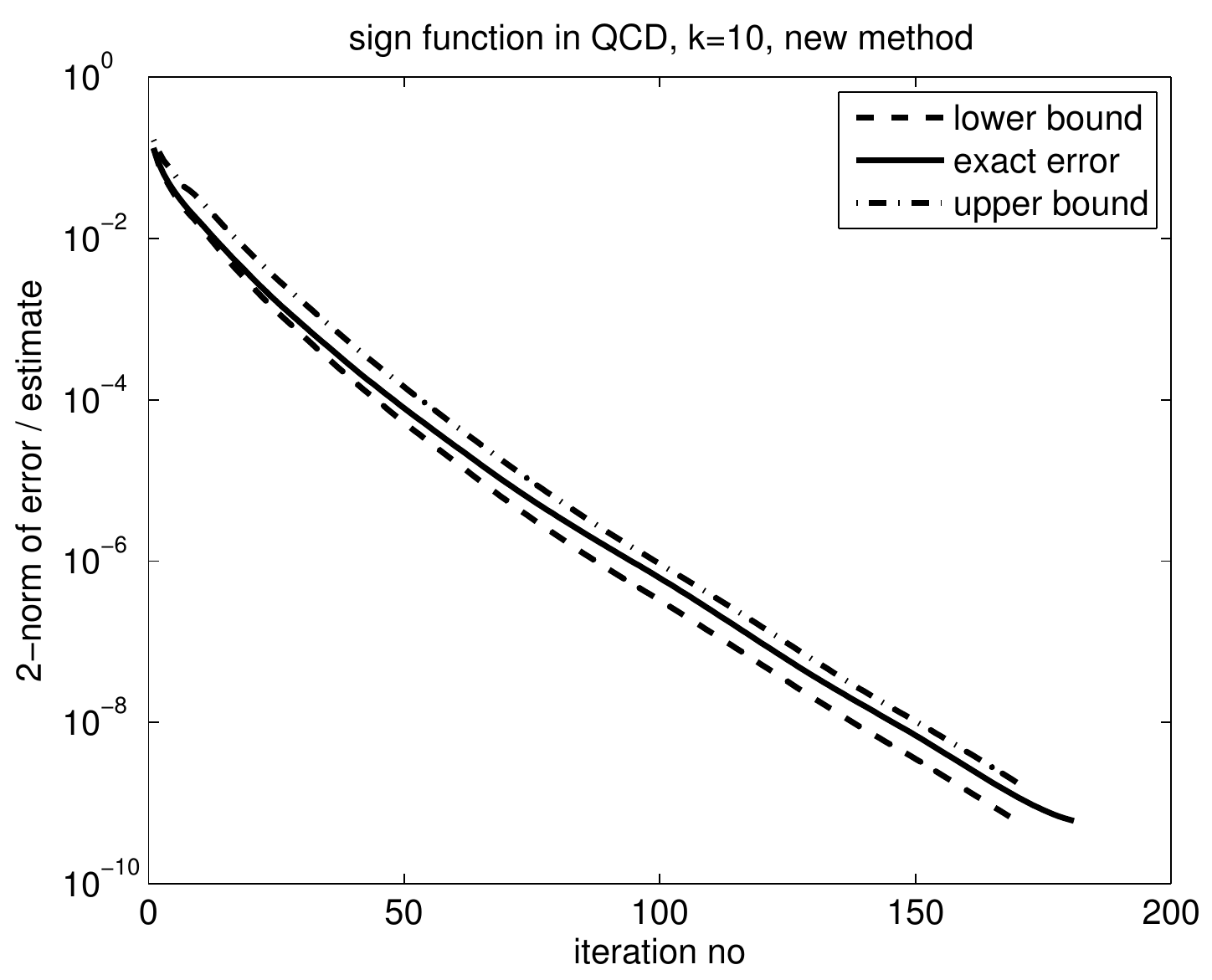}}
\caption{Error bounds and exact error for Zolotarev approximation for $\sign(Q)$ in lattice QCD, 
$16^4$ lattice, Algorithm~\ref{LanczosPlusEstimates:alg}, $k=10$. \label{qcd_16:fig}}
\end{figure}

As a last example we consider the $[10/10]$ Pad\'e approximation to the exponential function. Using $[m/m]$ Pad\'e approximations 
is very common for approximating the matrix exponential; cf.\ \cite{Baker.GravesMorris.96,Lopez.Simoncini.06, Moret.Novati.04b}. Matlab's {\tt expm} uses Pad\'e approximations along with 
the scaling and squaring approach \cite{matlab7}. The partial fraction expansion of the $[10/10]$ Pad\'e approximation to the exponential
has the form 
\[
1 + \sum_{i=1}^5 \frac{\omega_i}{t-\sigma_i} + \frac{\overline{\omega}_i}{t-\overline{\sigma}_i} =: 1 + g(t),
\]
where all the coefficients $\omega_i$ and poles $\sigma_i$ are non-real. We want to compute $(I+g(A))b$, so we
focus on $g(A)b$. Due to the complex poles and coefficients we cannot easily obtain information on the sign of the derivatives of $g_m$, implying that this time we do not know whether Algorithm~\ref{LanczosPlusEstimates:alg} really obtains bounds for the error.

Figure~\ref{exp_2d_laplace:fig} reports the results that we get when computing $g(A)b$ with $A$ the negative discrete Laplacian on a 
$200 \times 200$ grid, $b$ a random vector. Computing $\exp(A)b$ for the negative discrete Laplacian $A$ (or a scalar 
multiple thereof) is a common task when using exponential integrators in semi-discretized parabolic partial differential equations, see \cite{Hairer.Lubich.Wanner.02}.   
 
\begin{figure} 
\centerline{\includegraphics[width=0.48\textwidth]{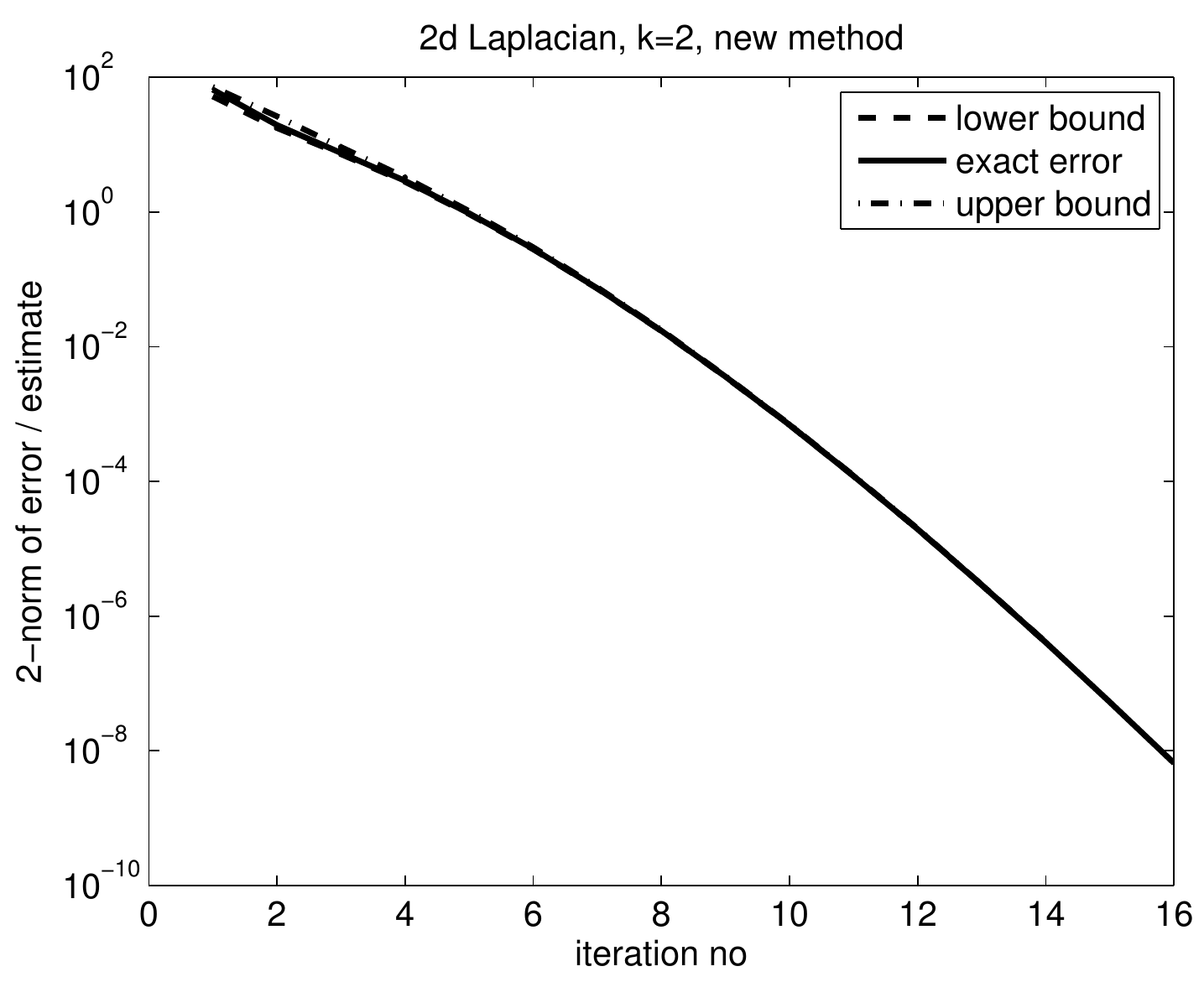}
\hfill \includegraphics[width=0.48\textwidth]{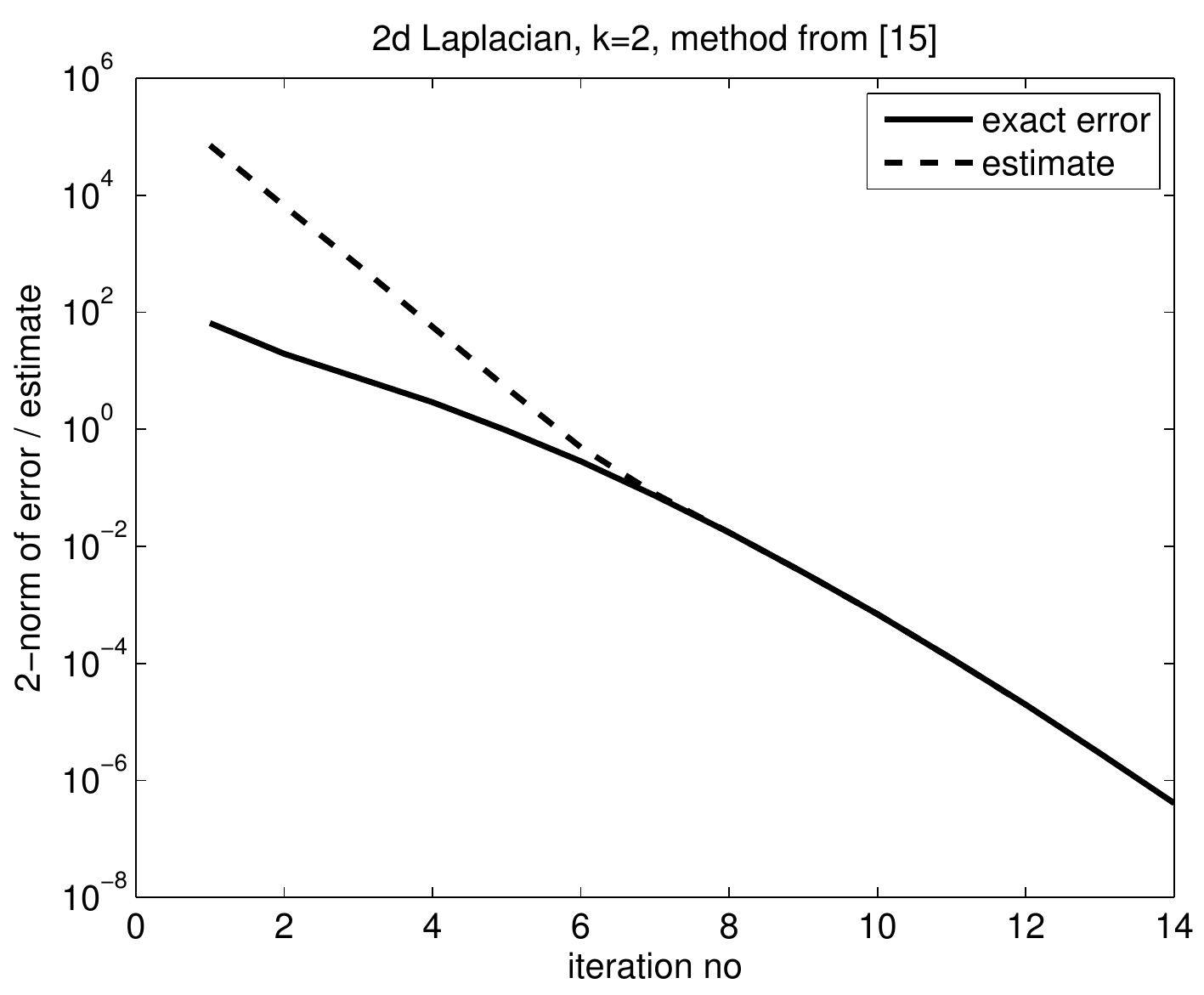}}
\caption{Error bounds and exact error for Pad\'e approximation to the exponential, $A$ negative discrete Laplacian on $200 \times 200$ mesh. \label{exp_2d_laplace:fig}}
\end{figure}
For this example, taking $k=2$ in Algorithm~\ref{LanczosPlusEstimates:alg} is already sufficient to obtain error estimates which are very close 
to the exact error. Although we do not have a theoretical justification, the error estimates produced by
Algorithm~\ref{LanczosPlusEstimates:alg} turn out to indeed represent (tight) lower and upper bounds for the error. The right part of 
Figure~\ref{exp_2d_laplace:fig} shows the error and the error estimates obtained with the approach suggested in \cite{FrSi07}. Note that due 
to the complex shifts, this approach amounts to perform a variant of the CG method which uses the indefinite bilinear form $\langle x, y \rangle_T = 
y^Tx$ on $\mathC^n$. This method thus does not obtain the same iterates as Algorithm~\ref{LanczosPlusEstimates:alg}, but we see that the 
norm of the error is quite similar for both approaches. The error estimate from \cite{FrSi07} is much less precise for the first half of the iterations, whereas for the second half of the iterations it is comparable to the estimates from Algorithm~\ref{LanczosPlusEstimates:alg}. 

The matrix exponential is also used in the analysis of large graphs like those describing social networks. If $A$ is the adjacency matrix 
of such an undirected graph, then $\exp(A)_{ij}$ denotes the {\em communicability} (see \cite{Estrada08}) between nodes $i$ and $j$. Accordingly, $\exp(A)e_i$ gets us the communicabilities of node $i$ with all other nodes. For our numerical experiments we took $i=1$ and we
used the graph {\tt dblp-2010} from group {\tt LAW} of the University of Florida sparse matrix collection. It describes the co-author relation between all authors 
appearing in the DBLP database of journal papers in computer science as of some day in the year 2010. This graph has more than 300,000
nodes and about 1.5 million vertices. Note that $A$ is an indefinite matrix. For this matrix the error estimates from \cite{FrSi07} could not be applied, since the use of the indefinite bilinear form produced breakdowns in the algorithm. In Figure~\ref{dblp:fig} we give only one (the ``lower bound'' $\ell_m$) of the estimates from Algorithm~\ref{LanczosPlusEstimates:alg} for $k=2$ and $k=10$. The ``upper'' bound $u_m$ behaves quite similarly (where we compute $a$ as in the second example). For $k=2$ the estimates appear to systematically represent a lower bound. For $k=10$ we clearly see that the 
estimate does not represent an upper nor lower bound for the error, but we get an estimate for the error which is never more than a 
factor of 5 off the exact error.  
\begin{figure} 
\centerline{\includegraphics[width=0.48\textwidth]{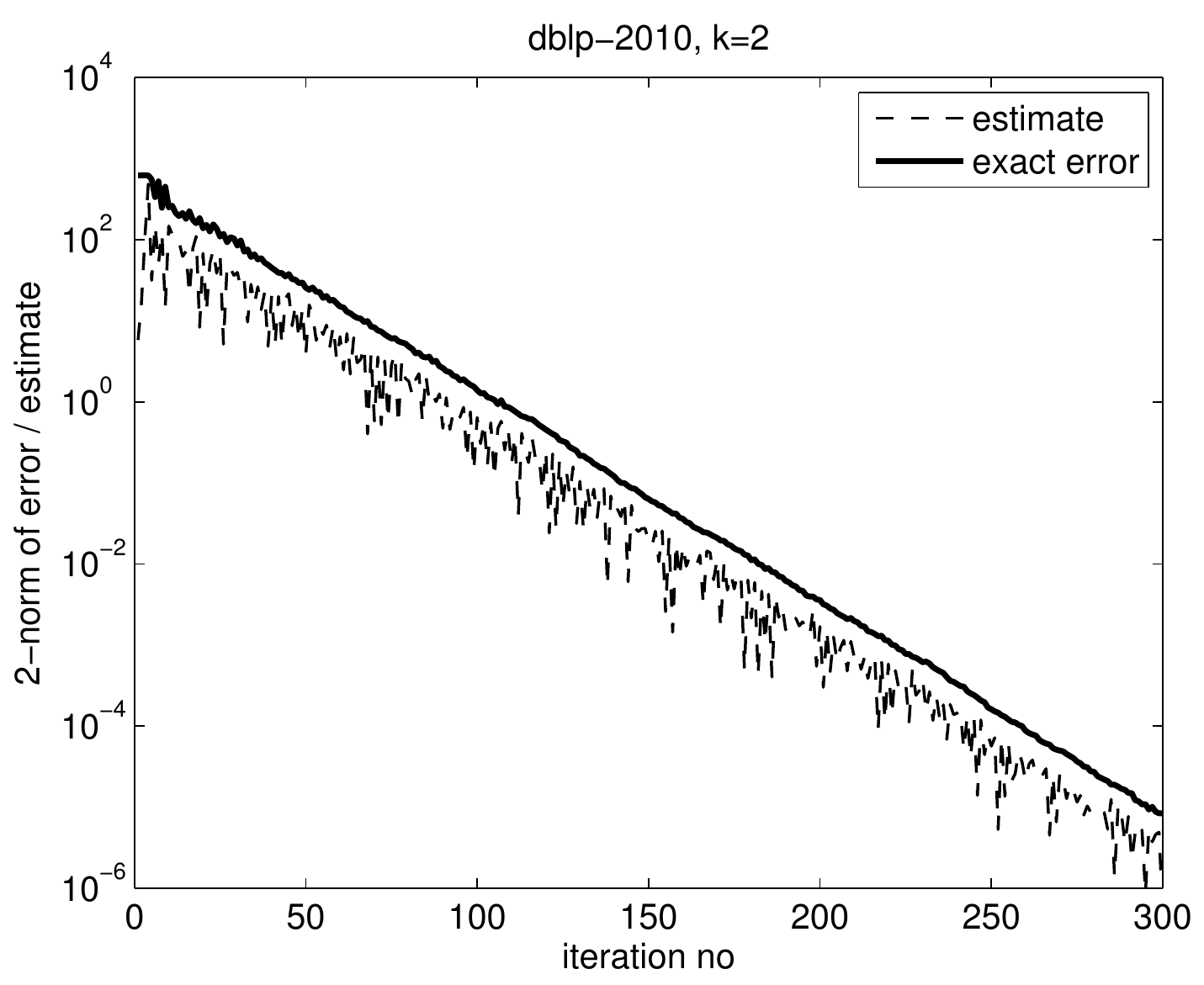}
\hfill \includegraphics[width=0.48\textwidth]{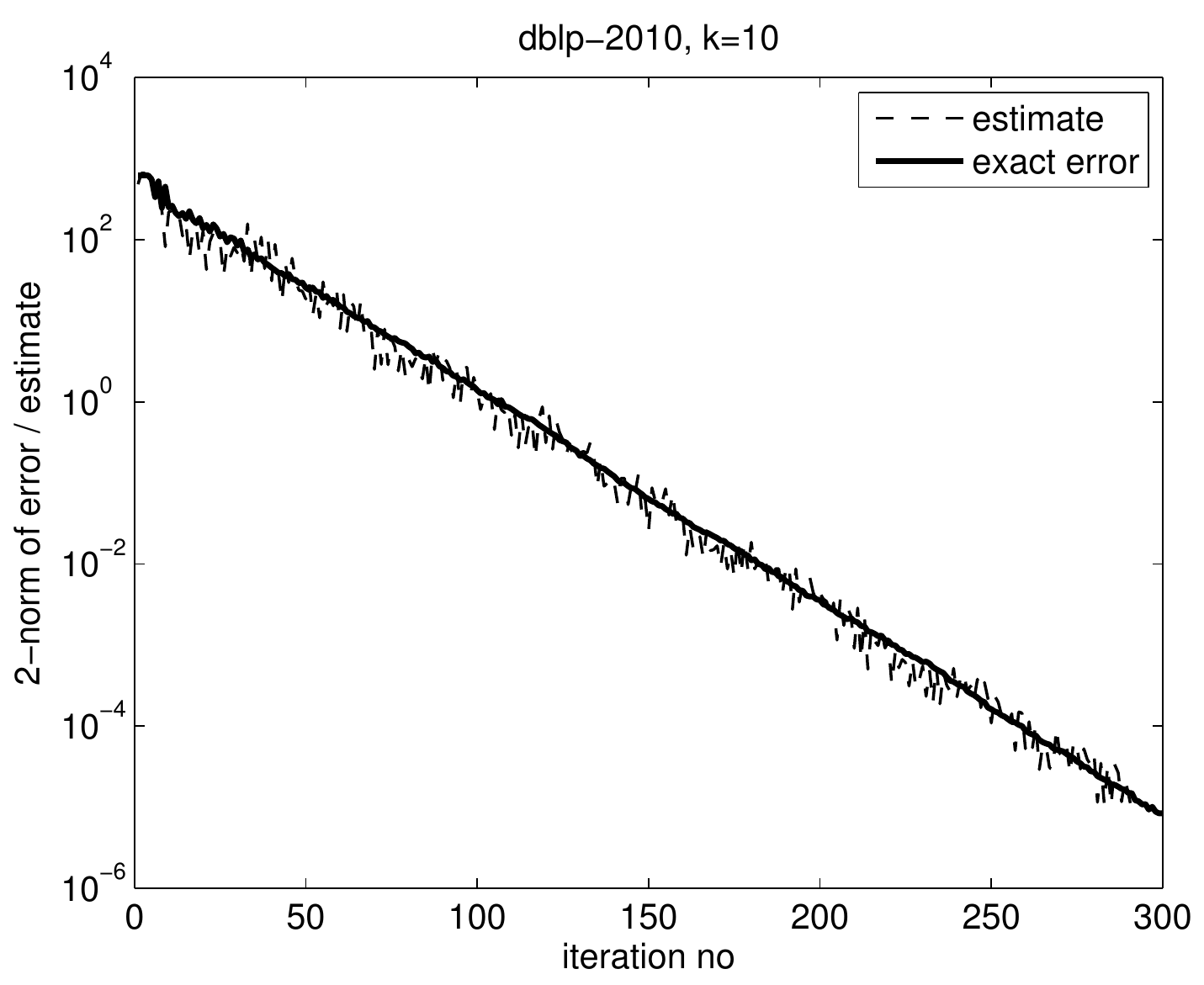}}
\caption{Error bounds and exact error for Pad\'e approximation to the exponential, $A$ adjacency matrix for dblp-2010 graph, $k$ = 2 and $k=10$ in Algorithm~\ref{LanczosPlusEstimates:alg}\label{dblp:fig}}
\end{figure}

\section{Conclusions}
Building on the theory of Golub and Meurant we proposed a novel use of this theory which allows, in particular, to
obtain estimates, lower and upper bounds for the 2-norm of the error of the action of a rational matrix function on a vector. Such estimates are important for rational matrix functions as they can be used as a stopping criterion. Upper bounds have the advantage to provide a {\em reliable} stopping criterion: If we stop the iteration once the upper bound is less than a given threshold $\epsilon$, the exact error is also smaller than $\epsilon$. Our new approach relies on a
secondary, restarted Lanczos process which can be obtained very efficiently at cost which is independent of 
the matrix size. Numerical examples show that the new approach can give very good error bounds with the quality of
the bounds depending on the number $k$ of steps in the secondary Lanczos process and on the condition of the matrix 
function. The effects of rounding errors were not studied, but our approach follows the philosophy put forward  
in \cite{Strakos.Tichy.02,Strakos.Tichy.05} in that it only makes use of ``local orthogonality'', the secondary 
Lanczos process involving just the last $2k$ iterations of the primary Lanczos process. Our approach can, in principle, be extended to the preconditioning idea from \cite{Hochbruck.vdEshof.06}, where instead of $f(A)b$ one computes $r(\tau I -A)^{-1}b$ with $r$ the rational function $r(t) = f(\tau -t^{-1})$, see also \cite{FrSi07,PoSi08}. However, the conditions of Corollary~\ref{bounds:cor} on the signs of the  poles and the coefficients will usually not be fulfilled for $r$, so that we cannot expect to obtain lower and upper bounds. 

\bibliographystyle{siam}

\bibliography{literatur}
  
\end{document}